\newtheorem{theorem}{Theorem}[section]
\newtheorem{lemma}[theorem]{Lemma}
\newtheorem{corollary}[theorem]{Corollary}
\newtheorem{defi}[theorem]{Definition}
\newtheorem{rema}[theorem]{Remark}
\newtheorem{exam}[theorem]{Example}
\newenvironment{definition}{\begin{defi}\rm}{\hfill $\lhd$\end{defi}}
\newenvironment{remark}{\begin{rema}\rm}{\end{rema}}
\newenvironment{example}{\begin{exam}\rm}{\end{exam}}
\newenvironment{proof}{\begin{trivlist}\item[]{\bf
Proof.}}{\hfill {\sc qed}\end{trivlist}}
\newtheorem{claim2}{\sc Claim}
\renewcommand{\phi}{\varphi} 
\tikzset{
    vertex/.style = {
        circle,
        fill            = black,
        outer sep = 2pt,
        inner sep = 1pt,
    }
}
\newcommand{\RomanNumeralCaps}[1]
    {\MakeUppercase{\romannumeral #1}}
\date{}
\title{Local Differential Privacy for Belief Functions}
\author[1]{Qiyu Li}
\author[1]{Chunlai Zhou\thanks{Corresponding author: czhou@ruc.edu.cn}}
\author[1]{Biao Qin}
\author[2]{Zhiqiang Xu}
\affil[1]{Computer Science Dept., Renmin University of China, Beijing, CHINA}
\affil[2]{Mohamed bin Zayed University of Artificial Intelligence, Abu Dhabi, UAE}
\affil[ ]{\{qiyuli,czhou,qinbiao\}@ruc.edu.cn, zhiqiangxu2001@gmail.com}
\begin{document}
\maketitle

\begin{abstract}
   In this paper, we propose two \emph{new} definitions of local differential privacy for belief functions. One is based on  Shafer's semantics of randomly coded messages and the other from the perspective of imprecise probabilities. We show that such basic properties as composition and post-processing also hold for our new definitions. Moreover, we provide a hypothesis testing framework for these definitions and study the effect of ``don't know" in the trade-off between privacy and utility in discrete distribution estimation.
\end{abstract}


\section{Introduction}

 \emph{Differential privacy} (DP) is a mathematically rigorous definition of privacy which addresses the paradox of learning nothing about
an \emph{individual} while learning useful information about a \emph{population} \cite{DworkMNS06,DworkR14}.  In particular, \emph{local differential privacy} (LDP) is a model of differential privacy with the added restriction that even if an adversary has access to the personal responses of an individual in the database, that adversary will still be unable to learn too much about the user's personal data \cite{KLNRS08,KairouzOV16,DuchiJW13}.  The uncertainty in standard LDP mechanisms is usually provided by randomization which associates each input with a \emph{probability} function over all possible outputs. The prototypical example of an LDP mechanism is the \emph{randomized response} survey technique proposed in \cite{Warner65}.  Current randomized response mechanisms equate privacy-preserving with lying and are designed on the assumption that users abide by the data collection protocol which allows respondents to lie with a \emph{known} probability.  However, recent research results from the perspective of the \emph{respondents} show that, in practice, although these mechanisms allow the respondents to maintain privacy, the procedures may confuse respondents, fail to address the concerns of the users  and hence yield
nonresponse or noncompliance \cite{Xiong0LJ20,Cummings2021need,Ramokapane2021truth}.  An effective  differential privacy communication can increase data-sharing rates \cite{Xiong0LJ20}.  

To address noncompliance and nonresponse,  we propose in this paper to design  differential privacy mechanisms which incorporate ``don't know" or nonresponse as an alternative outcome or allow imprecision in the mechanism design.  In practice,  people may prefer not to response or say ``I don't know" to withhold sensitive information which minimizes the questionable ethical consequences of lying in their eyes \cite{BullekGMP2017}. By addressing such ethical privacy concerns, our new mechanisms aims to increase respondents' willing to share their data.  Here we study this new type of privacy mechnisms from a more general Dempster-Shafer perspective by representing uncertainty in privacy mechanisms with \emph{belief functions} \cite{Dempster67,Shafer76}.  The Dempster-Shafer theory (also known as the theory of evidence or the theory of belief functions) is a well-known uncertainty theory for its expressiveness in representing ignorance. The theory improves the root concepts of  probabilities ``yes" and ``no" that sum to one, by appending a third probability of ``\emph{don’t know}" \cite{Dempster08}.  As the world of statistical analysis moves more and more to ``big data" and associated ``complex systems", the Dempster-Shafer theory provides a middle ground with the third probability ``don't know" and can be expected to become increasingly important in privacy protection.

Our first and main contribution in this paper is to propose two new definitions of LDP (one is $\epsilon$-local differential privacy according to Shafer ($\epsilon$-SLDP) (Definition \ref{def:SLDP}) and the other according to Walley ($\epsilon$-WLDP) (Definition \ref{def:WLDP})) and to provide a statistical framework for these two definitions as the trade-offs between type \RomanNumeralCaps{1} and \RomanNumeralCaps{2} errors in a natural hypothesis-testing problem (Theorems \ref{thm:equivalence4SLDP} and \ref{thm:equivalence4WLDP}). Our second contribution is to characterize the effect of ``don't know" in the trade-off between privacy and utility in discrete distribution estimation problem. The privacy  mechanisms in the two definitions associate each input $x$ with a \emph{belief function} on the output set $Y$. The difference between these two definitions comes from their different semantics of belief functions.   The first definition is motivated by Shafer's interpretation of belief functions as randomly coded messages \cite{ShaferT85}.  In this semantics, we generalize Warner's randomized response mechanism by allowing answering ``don't know" with probability $1-p-q$ where $p$ is the probability of answering truthfully and $q$ the probability of lying.  For the discrete distribution estimation problem of a generalized Warner's model, we study the effect of ``don't know" on the trade-off between the privacy loss and the estimation accuracy. The \emph{most important and difficult} step is to compute the variance of the maximum likelihood estimation of the parameter $\pi$, the true proportion of the people with the sensitive property.  We employ some combinatorial techniques to obtain a formula for the estimation accuracy (Theorem \ref{th:variance}).
 We show that, when the probability of ``don't know" increases, the overall effect of the trade-off for this generalized model decreases, and when this probability equals 0, the effect is optimal and the trade-off is the same as that for the standard Warner's model (Figure \ref{fig:2}).
    In the second definition, we adopt the imprecise-probability semantics to accommodate \emph{unknown response probabilities} in privacy mechanisms and interpret belief function $bel$ as the set of all  probability functions $pr$ which are consistent with $bel$ \cite{Walley90}. Both the privacy loss and estimation accuracy are defined with respect to those consistent probability functions according to the worst-case analysis.
 Moreover, we compare the trade-offs between privacy and estimation accuracy for these two definitions ($\epsilon$-SLDP and $\epsilon$-WLDP) and Warner's randomized response mechanism (Figure \ref{fig:tradeoffs}).

\section{Dempster-Shafer Theory} \label{sec:DS}

Let $\Omega$ be a frame  and $\mathcal{A} = 2^{\Omega}$ be the Boolean algebra of propositions. $|A|$ denotes the cardinality of a subset $A$.  A \emph{mass assignment} (or \emph{mass function}) over $\Omega$  is a mapping $m: \mathcal{A}\rightarrow [0,1]$ satisfying    $ \sum_{A\in \mathcal{A}} m(A) =1$.
A mass function $m$ is called \emph{normal} if $m(\emptyset) =0$.   A \emph{belief function} is a function $bel: \mathcal{A}\rightarrow [0,1]$ satisfying the conditions: $bel(\emptyset) =0$,  $bel(\Omega) =1$ and
  $bel(\bigcup_{i=1}^{n} A_i)\geq\sum_{\emptyset \neq I \subseteq \{1, \cdots, n\}} (-1)^{|I|+1} bel(\cap_{i\in I} A_i)$
   where $A_i \in \mathcal{A}$ for all $i\in \{1, \cdots, n\}$.
A mapping $f: \mathcal{A} \rightarrow [0,1]$ is a belief function if and only if its M{\"o}bius transform is a mass assignment (Page 39 in \cite{Shafer76}). In other words, if $m: \mathcal{A}\rightarrow [0,1]$ is a mass assignment, then it determines a belief function $bel: \mathcal{A}\rightarrow [0,1]$ as follows:
    $bel(A) = \sum_{B \subseteq A} m(B)$ for all $A\in \mathcal{A}$.
Moreover,  given a belief function $bel$, we can obtain its corresponding mass function $m$ as follows:
   $ m(A)  = \sum_{B\subseteq A} (-1)^{|A\setminus B|} bel(B) \text{ for all } A\in \mathcal{A}.$
Intuitively, for a subset event $A$,  $m(A)$ measures the belief that an agent commits \emph{exactly} to $A$, not the total belief $bel(A)$ that an agent commits to $A$.  A subset $A$ with non-zero mass is called a \emph{focal set}.
The belief function $bel$ is called \emph{Bayesian} if $m(A)=0$ for all non-singletons $A$.  The corresponding \emph{plausibility function} $pl_m: 2^{\Omega} \rightarrow [0,1]$ is defined by $pl_m(A) = \sum_{E\cap A \neq \emptyset} m(E) $ for all $A\subseteq \Omega$.  Whenever the context is clear, we drop the subscript $m$. For $m, bel$ and $pl$, if we know any one of them, then we can determine the other two.   Without further notice, all mass functions in this paper are assumed to be normal and all subsets are focal.


  In this paper, we focus on only two semantics of belief functions. The first one is
   Shafer's semantics of belief functions in terms of \emph{randomly coded messages}. Suppose someone chooses a code at random from a list of codes, uses the code to encode a message, and then sends us the result.
We know the list of codes and the chance of each code being chosen--say
the list is $c_1, \cdots, c_n$, and the chance of $c_i$ being chosen is $p_i$. We decode the
encoded message using each of the codes and find that this always produces a
message of the form ``the truth is in A" for some non-empty subset $A$ of the
set of possibilities $\Omega$. Let $A_i$ denote the subset we get when we decode using
$c_i$, and set
$m(A) = \sum\{p_i: 1\leq i\leq n, A_i=A\}$
for each $A\subseteq \Omega$.   The number $m(A)$ is the sum of the chances for those codes that indicate A was the true message; it is, in a sense, the total chance that
the true message was $A$. Notice that $m(\emptyset) = 0$ and that the $m(A)$ sum to one. The quantity
$ bel(A) = \sum_{B\subseteq A} m(B)$
is, in a sense, the total chance that the true message implies $A$. If the true message
is infallible and the coded message is our only evidence, then it is natural
to call $bel (A)$ our probability or degree of belief that the truth lies in $A$.
The second interpretation of belief functions in this paper is from the perspective of imprecise probabilities. Given a belief function $bel$, let  $\mathcal{P}_{bel}$ denote the set of all probability functions which are consistent with or dominate over $bel$. In other words,
$ \mathcal{P}_{bel} = \{pr: pr\text{ is a probability function on }\Omega$ and $pr \geq bel\}$ where $pr\geq bel $ means $pr(E) \geq bel(E)$  for all  $E \subseteq \Omega$.  Due to lack of information, uncertainty can't be represented by a probability function but by a belief function $bel$. All consistent probability functions are possible. Whenever enough information is available, we may specify a probability function from $\mathcal{P}_{bel}$ to represent the uncertainty. One may refer to \cite{Cuzzolin21} and \cite{DworkR14} for a detailed introduction to belief functions and  DP.


\section{Local Differential Privacy}\label{sec:LDP}

Let $X$ be a private source of information defined on a discrete,
finite input alphabet $X = \{x_1, \cdots,x_k\}$ and $Y$ be an output alphabet
$Y = \{y_1, \cdots,y_l\}$ that need not be identical to the input alphabet
$X$. In this paper, we will represent a privacy
mechanism $Q$ via a row-stochastic matrix. For simplicity, we also use $Q$ to denote this matrix.  $Q$ is called an \emph{evidential} privacy mechanism if each row of the matrix $Q$ is a mass function on $Y$.  In other words, each evidential
privacy mechanism $Q$
maps $X = x$ to $Y\in E$ with $Q(x)$ which can be represented by a mass $m^{Q}_x(E)$ (belief $bel^{Q}_x(E)$ or plausibility $pl^{Q}_x(E)$) where $m^{Q}_x$ ($bel^{Q}_x(E)$ or $pl^{Q}_x(E)$) is a mass (belief or plausibility) function on $Y$ for all $x\in X$. Since $m_x^Q(\emptyset) =0$ for all $x$, we write the mechanism $Q$ as a $k\times (2^l-1)$ matrix. Whenever the context is clear, we usually drop the superscript $Q$.  In this paper, we assume that all the alphabet sets are finite. In other words, an evidential privacy mechanism is just a standard LDP mechanism  whose instructions are defined by random \emph{sets} instead of probability functions.

\subsection{LDP according to Shafer}\label{subsec:LDPS}

For an evidential privacy  mechanism $Q$, let $r^Q_S = max_{x,x'\in X, E\subseteq Y}\frac{m_x^Q(E)}{m_{x'}^Q(E)}$ and $\epsilon^Q_S = ln(r^Q_S)$.

\begin{definition}\label{def:SLDP} For any $\epsilon > 0$,
the mechanism $Q$ is called \emph{$\epsilon$-locally differential private} according to Shafer ($\epsilon$-SLDP for short) if $ -\epsilon \leq \epsilon^Q_S \leq \epsilon$.  
And  $\epsilon^Q_S$ is called the \emph{privacy loss} of $Q$ according to Shafer and $\epsilon$ is a \emph{privacy budget}.
\end{definition}
In other words, by observing $E$, the adversary cannot reliably infer whether $X = x$ or
$X = x'$ (for any pair $x$ and $x'$). Indeed, the smaller the $\epsilon$
is, the closer the likelihood ratio of $X = x$ to $X = x'$ is to
1. Therefore, when $\epsilon$ is small, the adversary cannot recover
the true value of $X$ reliably.  In this definition, we adopt Shafer's interpretation as randomly coded messages. Each subset of $Y$ is treated as an individual message or response. The mechanism randomly chooses a code $c$ and uses it to encode a message $E\subseteq Y$.   And $m_x(E)$ is equal to the chance of choosing $c$. If we set $2^Y\setminus \{\emptyset\}$ as the output alphabet, then the above $Q$ is simply the standard local differential private mechanism.
In particular, if each row of $Q$ is Bayesian, then $Q$ is essentially a standard randomized mechanism and the $\epsilon$-SLDP is just the standard $\epsilon$-$LDP$ for randomized privacy mechanisms.
Almost all basic properties for privacy-preserving  randomized mechanisms can be generalized to the setting of belief functions.   Let $r^Q_{pl,S} = max_{x,x'\in X, E\subseteq Y}\frac{pl_x^Q(E)}{pl_{x'}^Q(E)}$ and $r^Q_{bel,S} = max_{x,x'\in X, E\subseteq Y}\frac{bel_x^Q(E)}{bel_{x'}^Q(E)}$. Denote $\epsilon^Q_{pl,S}:= ln(r^Q_{pl,S})$ and $\epsilon^Q_{bel,S}:= ln(r^Q_{bel,S})$. 

\begin{lemma} \label{le:alternation1} If privacy mechanism $Q$ is $\epsilon$-$SLDP$, then  $-\epsilon \leq \epsilon^Q_{bel,S} \leq \epsilon$ and
      $-\epsilon \leq \epsilon^Q_{pl,S} \leq \epsilon$.
\end{lemma}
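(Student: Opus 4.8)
The plan is to reduce both inequalities to a single elementary observation: since $\epsilon$-SLDP is a \emph{multiplicative} constraint on the individual mass values, and both $bel$ and $pl$ are built from the masses by nonnegative summation, the same multiplicative bound is inherited by any such sum. Concretely, the underlying lemma I would use is that if $m_x(B) \leq c\, m_{x'}(B)$ holds for every $B$ in some index family with $c \geq 1$ fixed, then $\sum_B m_x(B) \leq c \sum_B m_{x'}(B)$; the ratio of two sums of matched terms never exceeds the worst termwise ratio. Everything else is bookkeeping.

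First I would unpack the hypothesis. Because $r^Q_S = \max_{x,x',E} m_x^Q(E)/m_{x'}^Q(E)$ is a maximum taken over a set of index pairs that is symmetric under swapping $x$ and $x'$, we always have $r^Q_S \geq 1$, hence $\epsilon^Q_S = \ln(r^Q_S) \geq 0$; so the constraint $-\epsilon \leq \epsilon^Q_S \leq \epsilon$ is really just $r^Q_S \leq e^{\epsilon}$. Rewriting this and again using the swap symmetry yields the clean two-sided form $e^{-\epsilon}\, m_{x'}^Q(E) \leq m_x^Q(E) \leq e^{\epsilon}\, m_{x'}^Q(E)$ for all $x,x' \in X$ and all $E \subseteq Y$, which is the form I will feed into the summations.

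Next, for the belief part I would write $bel_x^Q(A) = \sum_{B \subseteq A} m_x^Q(B)$ and apply the two-sided bound termwise, summing over $B \subseteq A$ to obtain $e^{-\epsilon}\, bel_{x'}^Q(A) \leq bel_x^Q(A) \leq e^{\epsilon}\, bel_{x'}^Q(A)$. Taking the maximum over $x,x',A$ gives $r^Q_{bel,S} \leq e^{\epsilon}$, i.e. $\epsilon^Q_{bel,S} \leq \epsilon$, while the same swap-symmetry argument as above forces $r^Q_{bel,S} \geq 1$ and hence $\epsilon^Q_{bel,S} \geq 0 \geq -\epsilon$. The plausibility part is identical with the sum $pl_x^Q(A) = \sum_{E \cap A \neq \emptyset} m_x^Q(E)$ replacing the Möbius sum. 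The only point requiring a word of care—and the closest thing to an obstacle—is the well-definedness of the ratios: I would invoke the standing assumption recorded in Section~\ref{sec:DS} that all mass functions are normal and all nonempty subsets are focal, so $m_{x'}^Q(E) > 0$ for every nonempty $E$, whence $bel_{x'}^Q(A) \geq m_{x'}^Q(A) > 0$ and $pl_{x'}^Q(A) > 0$ for every nonempty $A$; this rules out division by zero and makes every ratio above legitimate. No genuinely hard step arises, as the result is essentially a monotonicity consequence of the linearity of the belief and plausibility transforms in the masses.
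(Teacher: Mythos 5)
Your proposal is correct and follows essentially the same route as the paper: the paper's own proof likewise derives both bounds by applying the termwise mass inequality from the $\epsilon$-SLDP condition to the sums $bel_x(E) = \sum_{E'\subseteq E} m_x(E')$ and $pl_x(E) = \sum_{E'\cap E \neq \emptyset} m_x(E')$. Your additional remarks on the swap symmetry forcing $r^Q_S \geq 1$ and on the focality assumption guaranteeing nonzero denominators are sensible elaborations of details the paper leaves implicit, but they do not change the argument.
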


 From Lemma \ref{le:alternation1}, we know that $\epsilon^Q_S \geq \epsilon^Q_{pl,S}$. But generally we don't have the converse that $\epsilon^Q_{pl,S} \geq \epsilon^Q_{S}$. If we have several building blocks for designing differentially
private algorithms, it is important to understand how we can combine
them to design more sophisticated algorithms. 

\begin{lemma} \label{le:composition}(Composition) Let $Q_1$ be an $\epsilon_1$-SLDP  evidential privacy mechanism from $X$ to $Y_1$ and $Q_2$ be an $\epsilon_2$-SLDP  evidential privacy mechanisms from $X$ to $Y_2$.  Then their combination $Q_{1,2}$ defined by $Q_{1,2}(x) = (Q_1(x),Q_2(x) )$ is $\epsilon_1 + \epsilon_2$-SLDP.
\end{lemma}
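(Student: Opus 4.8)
The plan is to unwind the definition of the combined mechanism $Q_{1,2}$ as a mass function on the product frame $Y_1 \times Y_2$ and then to exploit the fact that its privacy loss factors across the two coordinates. Under Shafer's randomly-coded-messages semantics, running $Q_1$ and $Q_2$ independently on the same input $x$ amounts to drawing a code for each mechanism independently; decoding then yields a joint message of the rectangular form $E_1 \times E_2$ with $E_1 \subseteq Y_1$ and $E_2 \subseteq Y_2$. Consequently the combined mass function is supported on rectangles and satisfies $m_x^{Q_{1,2}}(E_1 \times E_2) = m_x^{Q_1}(E_1) \cdot m_x^{Q_2}(E_2)$, while every non-rectangular subset of $Y_1 \times Y_2$ receives mass $0$. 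I would record this product formula as the first step, since everything else follows from it.

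Second, I would observe that the maximum in the definition of $r^{Q_{1,2}}_S$ need only range over the focal sets of $Q_{1,2}$, namely the rectangles $E_1 \times E_2$ (for a non-rectangular $F$ both $m_x^{Q_{1,2}}(F)$ and $m_{x'}^{Q_{1,2}}(F)$ vanish, so the corresponding ratio does not contribute). For such a focal set the product formula gives
\[
\frac{m_x^{Q_{1,2}}(E_1\times E_2)}{m_{x'}^{Q_{1,2}}(E_1\times E_2)}
= \frac{m_x^{Q_1}(E_1)}{m_{x'}^{Q_1}(E_1)}\cdot\frac{m_x^{Q_2}(E_2)}{m_{x'}^{Q_2}(E_2)}.
\]

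Third, I would bound each factor separately. Because $Q_1$ is $\epsilon_1$-SLDP and $Q_2$ is $\epsilon_2$-SLDP, the two ratios on the right lie in $[e^{-\epsilon_1},e^{\epsilon_1}]$ and $[e^{-\epsilon_2},e^{\epsilon_2}]$ respectively, for every choice of $x, x', E_1, E_2$. Multiplying these bounds shows that the joint ratio lies in $[e^{-(\epsilon_1+\epsilon_2)}, e^{\epsilon_1+\epsilon_2}]$; taking the maximum over all focal sets and all pairs $x,x'$ and then taking logarithms yields $-(\epsilon_1+\epsilon_2)\leq \epsilon^{Q_{1,2}}_S \leq \epsilon_1+\epsilon_2$, which is exactly $\epsilon_1+\epsilon_2$-SLDP.

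The only real subtlety, and the step I would treat most carefully, is the first one: pinning down that the combination $Q_{1,2}(x)=(Q_1(x),Q_2(x))$ really is the independent product on the product frame, so that its focal sets are rectangles and its masses factor. Once this is justified by appeal to the randomly-coded-messages interpretation (independent code draws on disjoint frames produce no conflict, hence require no normalization), the remainder is the short multiplicative computation above, which mirrors the classical composition theorem for probabilistic LDP and reduces to it when every row is Bayesian.
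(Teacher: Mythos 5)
Your proof is correct and takes essentially the same route as the paper: the paper's entire argument is the single factorization $\frac{m^{Q_{1,2}}_x(E_1,E_2)}{m^{Q_{1,2}}_{x'}(E_1,E_2)}=\frac{m^{Q_1}_x(E_1)}{m^{Q_1}_{x'}(E_1)}\cdot\frac{m^{Q_2}_x(E_2)}{m^{Q_2}_{x'}(E_2)}$, after which each factor is bounded by $e^{\pm\epsilon_i}$. You merely make explicit the supporting details (the product mass is supported on rectangles, the maximum ranges over focal sets) that the paper leaves implicit.
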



The composition of a \emph{data-independent} mapping $f$ with an $\epsilon$ locally
differential private algorithm $Q$ is also $\epsilon$ locally  differential private.

\begin{lemma} \label{le:post}(Post-processing) Let $Q$ be an $\epsilon$-SLDP mechanism from $X$ to $Y$ and $f$ is a randomized algorithm from $Y$ to another finite alphabet set $Z$. Then $f\circ Q$ is an $\epsilon$-SLDP mechanism from $X$ to $Z$.
\end{lemma}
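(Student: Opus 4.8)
The plan is to reduce the statement to the same weighted-average (mediant) argument that proves post-processing for classical randomized mechanisms, once we have pinned down how a randomized map $f$ acts on mass functions in the randomly coded messages semantics. First I would make precise the mass function of $f\circ Q$ on $Z$. Write the internal randomness of $f$ as a variable $\omega$ drawn from a fixed distribution $\rho$ that is independent of the code chosen by $Q$; for each fixed $\omega$ the algorithm $f$ becomes a deterministic map $f_\omega\colon Y\to Z$, which pushes a focal set $E\subseteq Y$ forward to its image $f_\omega[E]\subseteq Z$. Summing over the code drawn by $Q$ and over $\omega$, the mass that $f\circ Q$ assigns to a subset $F\subseteq Z$ on input $x$ is
\begin{equation*}
m^{f\circ Q}_x(F)=\sum_{E\subseteq Y} m^Q_x(E)\,P(F\mid E),\qquad P(F\mid E):=\sum_{\omega:\,f_\omega[E]=F}\rho(\omega).
\end{equation*}
The decisive feature is that the kernel $P(F\mid E)$ depends only on $f$ and not on the input $x$, precisely because the randomness of $f$ is drawn independently of $Q$; this is what the phrase \emph{data-independent} $f$ encodes.

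Next I would invoke the privacy hypothesis. Since $Q$ is $\epsilon$-SLDP, Definition \ref{def:SLDP} gives $e^{-\epsilon}\leq m^Q_x(E)/m^Q_{x'}(E)\leq e^{\epsilon}$ for every pair $x,x'\in X$ and every $E\subseteq Y$; equivalently $m^Q_x(E)\leq e^{\epsilon}m^Q_{x'}(E)$ and $m^Q_x(E)\geq e^{-\epsilon}m^Q_{x'}(E)$ termwise. Multiplying each inequality by the nonnegative weight $P(F\mid E)$ and summing over $E$ yields
\begin{equation*}
m^{f\circ Q}_x(F)=\sum_{E}P(F\mid E)\,m^Q_x(E)\leq e^{\epsilon}\sum_{E}P(F\mid E)\,m^Q_{x'}(E)=e^{\epsilon}\,m^{f\circ Q}_{x'}(F),
\end{equation*}
and symmetrically $m^{f\circ Q}_x(F)\geq e^{-\epsilon}m^{f\circ Q}_{x'}(F)$. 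Taking the maximum over $x,x'$ and $F$ gives $e^{-\epsilon}\leq r^{f\circ Q}_S\leq e^{\epsilon}$, i.e.\ $-\epsilon\leq\epsilon^{f\circ Q}_S\leq\epsilon$, which is exactly $\epsilon$-SLDP for $f\circ Q$.

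The routine bookkeeping—verifying that the $m^{f\circ Q}_x$ are genuine (normal) mass functions on $Z$ and that the deterministic case ($P(F\mid E)\in\{0,1\}$) is a special instance—I would dispatch in a line. The step I expect to be the real obstacle is the first one: giving a clean and defensible description of the composite mechanism in the Dempster--Shafer setting. One must be careful that ``applying $f$'' to a random set means pushing the focal set \emph{forward} through $f$ (an image, not a preimage), and that the internal randomness of $f$ is genuinely independent of the code used by $Q$; it is exactly this independence that makes $P(F\mid E)$ data-independent and lets the termwise likelihood-ratio bounds survive the averaging. Once the composite mass function is written in the displayed product form, the privacy bound is immediate and no further structure of belief functions is needed.
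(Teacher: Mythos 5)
Your proposal is correct, and the core of the argument---termwise multiplication of the likelihood-ratio bounds $e^{-\epsilon}\leq m^Q_x(E)/m^Q_{x'}(E)\leq e^{\epsilon}$ by a nonnegative, data-independent kernel, followed by summation---is exactly the mechanism the paper relies on. The two treatments differ in how the composite mass function is defined, and here your version is actually the more careful one. The paper's proof is a single line: it reduces to a deterministic $f\colon Y\to Z$ and sets $m_x^{f\circ Q}(E)=m_x^Q(f^{-1}(E))$, i.e.\ it works with \emph{preimages}; as written, $E\mapsto m^Q_x(f^{-1}(E))$ need not even be a mass function on $Z$ when $f$ is not a bijection (non-saturated focal sets of $Q$ receive no image, and distinct $E\subseteq Z$ can share a preimage), although the ratio bound it asserts is of course still valid for each individual $E$. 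Your pushforward via \emph{images} $f_\omega[E]$, with the kernel $P(F\mid E)=\sum_{\omega:\,f_\omega[E]=F}\rho(\omega)$, is the standard and defensible way to push a random set through a (randomized) map, it produces a genuine normal mass function on $Z$, and it handles the randomized case directly rather than by appeal to a convex decomposition into deterministic maps (which is how the paper's WLDP post-processing proof, via the ``weak von-Neumann--Birkhoff Lemma,'' fills that gap). So: same key inequality, but your formalization of $f\circ Q$ is cleaner and more complete than the paper's.
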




Now we offer a \emph{hypothesis testing} interpretation for the above $\epsilon$-$SLDP$.  From an attacker's perspective, the privacy requirement can be formalized as the following hypothesis testing problem for two datasets $x$ and $x'$: 
\begin{center} 
	$H_0$: the underlying dataset is $x$ vs. $H_1$: the underlying dataset is $x'$.
\end{center}
The output of the mechanism $Q$ serves as the basis for performing the hypothesis testing problem. The distinguishability of the two inputs $x$ and $x'$ can be translated into the trade-off between type \RomanNumeralCaps{1} and  type \RomanNumeralCaps{2} errors \cite{DongRS21}.  For belief functions, it is natural to consider \emph{minimax tests} \cite{HuberS73}. Formally, consider a rejection rule $\phi: Y \rightarrow [0,1]$.   Let $\mathcal{P}_x^Q$ and $\mathcal{P}_{x'}^Q$ denote the two sets of probability functions dominating $bel^Q_x$ and $bel^Q_{x'}$ respectively. In other words, $\mathcal{P}_x^Q = \{pr\in \Delta(Y): pr \geq bel_x^Q\}$
and $\mathcal{P}_{x'}^Q = \{pr\in \Delta(Y): pr \geq bel_{x'}^Q\}$. The \emph{lower power} of $\phi$ under $x'$ is defined as $\pi_{x'}:=\text{inf}_{pr \in \mathcal{P}^Q_{x'}} \mathbb{E}_{pr}(\phi)$.
 In the setting of $\epsilon$-$SLDP$, we assume that type \RomanNumeralCaps{1} error $\alpha_{\phi}$ is represented by $\text{sup}_{pr \in \mathcal{P}^Q_{x}} \mathbb{E}_{pr}(\phi)$ and type \RomanNumeralCaps{2} error by $\beta_{\phi}=1-\text{inf}_{pr \in \mathcal{P}^Q_{x'}} \mathbb{E}_{pr}(\phi)$.  A test $\phi$ is called a \emph{level}-$\alpha$ \emph{minimax test} if $\phi=argmin\{\beta_{\phi}: \alpha_{\phi}\leq \alpha\}$.  The following theorem is a generalization of the well-known result (Theorem 2.4 in \cite{WassermanZ2010}) for standard differential privacy.

\begin{theorem}\label{thm:equivalence4SLDP} For any evidential privacy mechanism $Q$, the following two statements are equivalent:
	\begin{enumerate}
		\item $Q$ is $\epsilon$-SLDP;
		\item If type  \RomanNumeralCaps{1} error $\alpha_{\phi}\in [l, L]$, then type \RomanNumeralCaps{2} error 
$\beta_{\phi}\in [u(L), U(l)]$ where $u(\alpha): = max\{e^{-\epsilon}(1-\alpha), 1-\alpha e^{\epsilon}\}$ and $U(\alpha): = min\{e^{\epsilon}(1-\alpha),1-\alpha e^{-\epsilon}\}$.
	\end{enumerate}
\end{theorem}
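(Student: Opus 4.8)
The plan is to follow the hypothesis-testing blueprint behind Theorem~2.4 of \cite{WassermanZ2010}, but to re-run it with the Choquet-integral/credal-set machinery that the minimax setup forces on us. First I would strip statement~(2) down to a pointwise claim. Since both $u(\alpha)=\max\{e^{-\epsilon}(1-\alpha),1-\alpha e^{\epsilon}\}$ and $U(\alpha)=\min\{e^{\epsilon}(1-\alpha),1-\alpha e^{-\epsilon}\}$ are nonincreasing in $\alpha$, the bracketed implication ``$\alpha_\phi\in[l,L]\Rightarrow\beta_\phi\in[u(L),U(l)]$'' is equivalent to the single two-sided inequality $u(\alpha_\phi)\le\beta_\phi\le U(\alpha_\phi)$ holding for \emph{every} test $\phi$. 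Unfolding the two envelopes, this is the conjunction of the four scalar inequalities $e^{-\epsilon}(1-\alpha_\phi)\le\beta_\phi$, $1-e^{\epsilon}\alpha_\phi\le\beta_\phi$, $\beta_\phi\le e^{\epsilon}(1-\alpha_\phi)$, and $\beta_\phi\le1-e^{-\epsilon}\alpha_\phi$, which I would verify one at a time.

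The second ingredient is a representation of the two errors. Because $\mathcal{P}^Q_x$ and $\mathcal{P}^Q_{x'}$ are exactly the credal sets dominating $bel^Q_x$ and $bel^Q_{x'}$, the extremal expectations are Choquet integrals: $\alpha_\phi=\sup_{pr\ge bel_x}\mathbb{E}_{pr}(\phi)$ is the Choquet integral of $\phi$ against the plausibility $pl_x$, while $\pi_{x'}=\inf_{pr\ge bel_{x'}}\mathbb{E}_{pr}(\phi)$ is the Choquet integral against $bel_{x'}$. Writing $A_t:=\{y:\phi(y)\ge t\}$, the layer-cake formula gives $\alpha_\phi=\int_0^1 pl_x(A_t)\,dt$ and $\pi_{x'}=\int_0^1 bel_{x'}(A_t)\,dt$, hence $\beta_\phi=1-\int_0^1 bel_{x'}(A_t)\,dt$. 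This reduces each of the four inequalities to a comparison of $pl_x(A_t)$ and $bel_{x'}(A_t)$ that can be checked at each threshold $t$ and then integrated.

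For the direction (1)$\Rightarrow$(2) I would feed in Lemma~\ref{le:alternation1}, which promotes $\epsilon$-SLDP to the uniform ratio bounds $e^{-\epsilon}\le bel_x(E)/bel_{x'}(E)\le e^{\epsilon}$ and $e^{-\epsilon}\le pl_x(E)/pl_{x'}(E)\le e^{\epsilon}$ for all $E$. The two lower-bound inequalities then fall out of pointwise estimates: $bel_{x'}(A_t)\le pl_{x'}(A_t)\le e^{\epsilon}pl_x(A_t)$ yields $\pi_{x'}\le e^{\epsilon}\alpha_\phi$, i.e.\ $\beta_\phi\ge1-e^{\epsilon}\alpha_\phi$, while replacing $\phi$ by $1-\phi$ and using $\overline{\mathbb{E}}_{x'}(1-\phi)=\beta_\phi$, $\underline{\mathbb{E}}_x(1-\phi)=1-\alpha_\phi$ together with $bel_{x'}\ge e^{-\epsilon}bel_x$ gives $\beta_\phi\ge e^{-\epsilon}(1-\alpha_\phi)$. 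The \emph{hard part} is the upper envelope $\beta_\phi\le U(\alpha_\phi)$, where the ``don't know'' mass genuinely bites. In ordinary DP this bound is free: one applies the lower bound to $1-\phi$ because there $\alpha_{1-\phi}=1-\alpha_\phi$ and $\beta_{1-\phi}=1-\beta_\phi$. Here that symmetry breaks, since $\alpha_{1-\phi}=1-\underline{\mathbb{E}}_x(\phi)$ involves the \emph{lower} expectation under $x$, whereas $\alpha_\phi=\overline{\mathbb{E}}_x(\phi)$ involves the upper one, and the two differ by precisely the mass $bel_x$ places on non-singleton focal sets. I therefore expect the upper bound to require a direct pointwise argument pairing $bel_{x'}(A_t)$ against $pl_x(A_t)$ through both ratio bounds of Lemma~\ref{le:alternation1} and the sandwich $bel\le pl$, and I expect this step to be the most sensitive to the exact worst-case (sup/inf) conventions adopted for the two error types.

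For the converse (2)$\Rightarrow$(1) I would specialize to indicator tests $\phi=\mathbf{1}_E$, for which $\alpha_\phi=pl_x(E)$ and $\beta_\phi=1-bel_{x'}(E)$; applying the two-sided trade-off to $E$ and to its complement recovers the ratio bounds $e^{-\epsilon}\le bel_x(E)/bel_{x'}(E)\le e^{\epsilon}$ and the analogous bounds for $pl$. The remaining move is to climb from these belief/plausibility ratios back to the mass ratio that defines $\epsilon^Q_S$. Since $m(E)=\sum_{B\subseteq E}(-1)^{|E\setminus B|}bel(B)$, a ratio bound on $m$ does not follow mechanically from one on $bel$, so I would invoke the randomly-coded-message structure directly---each focal set $E$ is a decoded message whose chance is $m_x(E)$---to argue the needed mass-ratio bound. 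This Möbius-inversion step, together with the upper-envelope estimate above, is where I anticipate the proof to depart most from the classical probabilistic argument.
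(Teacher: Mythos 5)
Your forward direction stumbles on a point you yourself flag as delicate: the pairing of the upper envelope with the type~\RomanNumeralCaps{1} error. You reduce statement~(2) to ``$u(\alpha_\phi)\le\beta_\phi\le U(\alpha_\phi)$ for every $\phi$'' with $\alpha_\phi=\sup_{pr\in\mathcal{P}^Q_x}\mathbb{E}_{pr}(\phi)$ a single number, and then try to prove $\beta_\phi\le 1-e^{-\epsilon}\alpha_\phi$, which at each threshold amounts to $bel_{x'}(A_t)\ge e^{-\epsilon}\,pl_x(A_t)$. That inequality is false whenever the rows have non-singleton focal sets --- for the vacuous mechanism $m_x(Y)=m_{x'}(Y)=1$ (which is $0$-SLDP) and $\phi=\mathbf{1}_E$ one gets $\alpha_\phi=pl_x(E)=1$, $\beta_\phi=1-bel_{x'}(E)=1$, yet $U(1)=0$. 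The theorem only holds under the reading the paper's proof actually uses: the type~\RomanNumeralCaps{1} error is the \emph{interval} $[l,L]=[bel_x(R),\,pl_x(R)]$, the lower envelope $u$ is derived from the hypothesis $pl_x(R)\le L$ (the paper's step~1), and the upper envelope $U$ is derived from the \emph{separate} hypothesis $bel_x(R)\le l$ (the paper's step~2), where $bel_{x'}(R)\ge e^{-\epsilon}bel_x(R)$ and $pl_{x'}(R^c)\le e^{\epsilon}pl_x(R^c)$ suffice. So the ``hard part'' you anticipate is hard because, as you have set it up, it is not true; the fix is not a cleverer pointwise estimate but the correct matching of $u$ with $L=pl_x(R)$ and $U$ with $l=bel_x(R)$. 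Your Choquet-integral/layer-cake treatment of randomized tests is a legitimate refinement over the paper's restriction to rejection regions, and your two lower-bound derivations do match the paper's step~1.

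On the converse you are right to worry: indicator tests only return ratio bounds on $bel$ and $pl$, and M\"obius inversion does not convert these into the mass-ratio bound defining $\epsilon^Q_S$ --- the paper itself concedes after Lemma~\ref{le:alternation1} that $\epsilon^Q_{pl,S}\ge\epsilon^Q_S$ fails in general. Your proposed remedy (``invoke the randomly-coded-message structure directly'') is a gesture, not an argument, so this direction remains open in your write-up. For what it is worth, the paper's own proof disposes of the converse with the single sentence that ``the reasonings above are bidirectional,'' which runs into exactly the obstruction you identified; you have located a real soft spot, but you have not closed it either.
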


Now we consider the hypothesis testing problem for the composition and would like to distinguish between $Q(x)\times Q(x)$ and $Q(x')\times Q(x')$. The corresponding type \RomanNumeralCaps{1} and \RomanNumeralCaps{2} errors $\alpha^2_{\phi}$ and $\beta^2_{\phi}$ can be defined similarly.  For simplicity, we only show the two-fold composition and other multi-fold compositions can be obtained similarly. 

\begin{corollary}\label{cor:composition4SLDP} For the hypothesis testing problem for the composition, if type  \RomanNumeralCaps{1} error $\alpha^2_{\phi}\in [l, L]$, then type \RomanNumeralCaps{2} error 
$\beta^2_{\phi}\in [u^2(L), U^2(l)]$ where $u^2(\alpha): = max\{e^{-2\epsilon}(1-\alpha), -\alpha+\frac{2}{e^{\epsilon}+1}, 1-\alpha e^{2\epsilon}\}$ and $U^2(\alpha): = min\{e^{2\epsilon}(1-\alpha),1-\alpha e^{-2\epsilon}, -\alpha +\frac{3-e^{-2\epsilon}}{e^{\epsilon}+1}\}$.
\end{corollary}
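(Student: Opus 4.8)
The plan is to treat the composed mechanism $Q\times Q$ in two ways: coarsely, as a single $2\epsilon$-SLDP mechanism, and finely, through its explicit product structure. First, by Lemma \ref{le:composition} the mechanism $Q\times Q$ is $2\epsilon$-SLDP, so Theorem \ref{thm:equivalence4SLDP} applied with privacy budget $2\epsilon$ immediately delivers the two ``outer'' constraints: substituting $2\epsilon$ for $\epsilon$ in $u$ and $U$ gives the lower bounds $e^{-2\epsilon}(1-\alpha)$ and $1-\alpha e^{2\epsilon}$ (the first and third arguments of $u^2$) and the upper bounds $e^{2\epsilon}(1-\alpha)$ and $1-\alpha e^{-2\epsilon}$ (the first and second arguments of $U^2$). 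These account for every term except the two middle ones, and they are exactly the bounds one obtains by ignoring the product structure entirely.

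The middle terms $-\alpha+\frac{2}{e^\epsilon+1}$ and $-\alpha+\frac{3-e^{-2\epsilon}}{e^\epsilon+1}$ are strictly tighter than the outer ones in a central range of $\alpha$ and cannot be read off from $2\epsilon$-SLDP alone; they arise because the likelihood ratio of the composition takes an intermediate value. To capture them I would pass to the extremal single-fold configuration that saturates Theorem \ref{thm:equivalence4SLDP}, namely the binary mechanism whose two outputs carry mass $\tfrac{e^\epsilon}{1+e^\epsilon}$ and $\tfrac{1}{1+e^\epsilon}$ under one input and the reverse under the other, so that the privacy loss takes the two values $\pm\epsilon$. Forming the product of two such copies, the privacy loss of the pair takes the three values $2\epsilon,\,0,\,-2\epsilon$, where the value $0$ comes from the two ``mixed'' outcomes in which one coordinate contributes $e^\epsilon$ and the other $e^{-\epsilon}$; these mixed outcomes carry combined mass $\tfrac{2e^\epsilon}{(1+e^\epsilon)^2}$ under each hypothesis.

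With this product pair in hand I would invoke the Neyman--Pearson lemma: the most powerful test rejects outcomes in decreasing order of likelihood ratio and the least powerful test in increasing order, so in each case the trade-off between $\alpha$ and $\beta$ is piecewise linear with breakpoints exactly at the three likelihood-ratio levels $e^{2\epsilon},\,1,\,e^{-2\epsilon}$. The two outer segments reproduce the outer bounds already obtained, while the new middle segment, governed by the mass sitting at likelihood ratio $1$, contributes the term $-\alpha+\frac{2}{e^\epsilon+1}$ to the lower envelope $u^2$; an entirely analogous computation with the least powerful test (equivalently, the dual least-favorable pair) produces the middle term $-\alpha+\frac{3-e^{-2\epsilon}}{e^\epsilon+1}$ of the upper envelope $U^2$. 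Taking the maximum of the three lower segments and the minimum of the three upper segments then yields the stated formulas for $u^2$ and $U^2$.

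The step I expect to be the main obstacle is the passage from this pointwise mass/likelihood picture to the minimax errors $\alpha^2_\phi$ and $\beta^2_\phi$, which are defined as a supremum and an infimum over the credal sets of the product belief functions $bel^Q_x\otimes bel^Q_x$ and $bel^Q_{x'}\otimes bel^Q_{x'}$. Because the plausibility and belief of a product belief function do not factor over non-rectangular reject regions $R\subseteq Y\times Y$, one cannot simply multiply the single-fold worst-case probabilities; instead I would appeal to the Huber--Strassen theorem \cite{HuberS73} (already the tool behind Theorem \ref{thm:equivalence4SLDP}) to reduce the minimax test between the two cores to an ordinary likelihood-ratio test against a least-favorable pair, and then identify that pair, which I expect to be the product of the single-fold least-favorable pairs. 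Establishing this reduction for the product rather than for a single mechanism, and checking that the intermediate likelihood-ratio level carries the claimed mass, is the crux of the argument; once it is in place the remaining piecewise-linear bookkeeping is routine.
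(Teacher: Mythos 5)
Your derivation of the two outer terms in each envelope is clean and correct: composing Lemma \ref{le:composition} with Theorem \ref{thm:equivalence4SLDP} at budget $2\epsilon$ does give $e^{-2\epsilon}(1-\alpha)$, $1-\alpha e^{2\epsilon}$ below and $e^{2\epsilon}(1-\alpha)$, $1-\alpha e^{-2\epsilon}$ above, and this is a tidier route to those four terms than the paper takes. The paper's own proof is entirely different in character: it fixes the concrete mechanism $Q_{2\times 3}$, writes out the product mass functions $m_x^{\otimes 2}$ and $m_{x'}^{\otimes 2}$ explicitly --- including the \emph{non-singleton} focal sets $\{0,1\}$, $\{1,2\}$, $\{0,1,2\}$ carrying masses $2p(1-p-q)$, $2q(1-p-q)$, $(1-p-q)^2$ --- and extracts all the bounds by direct algebra on the resulting belief and plausibility values. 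The middle terms are artifacts of that evidential structure, not of the Bayesian product you analyze.

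This is where your argument has a genuine gap, in two respects. First, your ``entirely analogous computation'' for the upper envelope does not produce the stated term. Running Neyman--Pearson in reverse order on the product of two copies of the extremal binary mechanism, the breakpoint after exhausting the likelihood-ratio-$e^{-2\epsilon}$ class sits at $\alpha=\tfrac{e^{2\epsilon}}{(1+e^{\epsilon})^2}$, $\beta=1-\tfrac{1}{(1+e^{\epsilon})^2}$, and the slope-$(-1)$ segment through it is $\beta=-\alpha+\tfrac{2e^{\epsilon}}{e^{\epsilon}+1}$, which is \emph{not} equal to $-\alpha+\tfrac{3-e^{-2\epsilon}}{e^{\epsilon}+1}$ (they agree only at $\epsilon=0$, and the stated term is strictly smaller for $\epsilon>0$). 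So the claimed middle term of $U^2$ simply does not fall out of your construction; you would have to track the belief/plausibility of the product mass function with its non-singleton focal sets, as the paper does, to see where $\tfrac{3-e^{-2\epsilon}}{e^{\epsilon}+1}$ comes from. Second, even for the lower envelope (where your middle term $-\alpha+\tfrac{2}{e^{\epsilon}+1}$ is numerically right for the Bayesian product), you have only verified the bound on one specific product pair; the corollary quantifies over all evidential $Q$, and the step you defer --- that the minimax test over the credal sets of $bel_x^Q\otimes bel_x^Q$ and $bel_{x'}^Q\otimes bel_{x'}^Q$ reduces to a likelihood-ratio test against the product of single-fold least-favorable pairs --- is exactly the nontrivial content, since the credal set of a product belief function strictly contains the product of the credal sets. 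You correctly name this as the crux, but naming it is not closing it; as written the proposal establishes only the four outer terms.
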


Both Theorem \ref{thm:equivalence4SLDP} and Corollary \ref{cor:composition4SLDP} can be visualized in Figure \ref{fig:1}.

\begin{figure}[htbp]
\centering
\includegraphics[height=6.0cm,width=6cm]{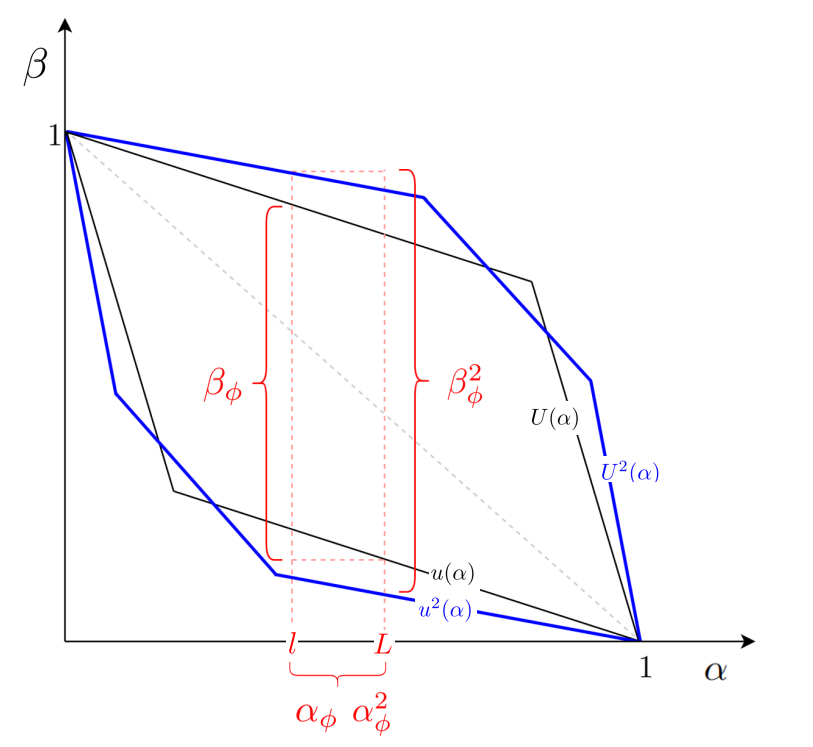}
\caption{Trade-off between type \RomanNumeralCaps{1} and \RomanNumeralCaps{2} errors for SLDP }
\label{fig:1}
\end{figure}

The discrete estimation problem is defined as
follows. Given a prior which is a vector $\pi = (\pi_1, \dots, \pi_k)$ on the probability
simplex $\mathbb{S}^k=\{p = (\pi_1, \dots, \pi_k): \pi_i \geq 0 (1\leq i\leq k), \sum_{i=1}^k \pi_i =1\}$, samples $X_1,\cdots, X_n$ are drawn i.i.d. according
to $\pi$. A privacy mechanism $Q$ is
then applied independently to each sample $X_i$ to produce
$Y^n = (Y_1; \cdots, Y_n)$, the sequence of private observations.
Observe that the $Y_i$’s are distributed according to $m = \pi Q$, which are mass functions not necessarily probability functions when $Q$ is evidential.  Our goal is to estimate the distribution vector $\pi$
from $Y^n$ within a certain privacy budget requirement. The performance of the estimation may be measured via a loss function. Here we use the mean square loss function.  $Q$ is called \emph{optimal} if the estimation error is the smallest.   A classic example for discrete distribution estimation is Warner's randomized response method for survey research \cite{Warner65}.


\begin{example}\label{ex:theme}
According to prototypical  Warner's randomized response mechanism $Q_W$, the respondent answers truthfully with probability $p$ and lies with probability $1-p$.  Let $\pi$ be the true proportion of the people having property $P$. A sample of $Y_1, \cdots, Y_n$ of respondents are drawn with replacement from the population and their responses are distributed  i.i.d. according to $(q_1, q_2)= (\pi, 1-\pi) Q_W$.    So $q_1= \pi p + (1-\pi)(1-p)$ and $q_2= \pi(1-p) + (1-\pi) p$. Arrange the indexing of the sample so that the first $n_1$ respondents say "Yes" and the remaining $n-n_1$ answers "No".  We obtain  the maximum likelihood estimation of $\pi$ as 
$\hat{\pi} = \frac{p-1}{2p-1} + \frac{n_1}{(p-1)n}$.
It can be shown \cite{Warner65,HolohanLM17} that this distribution estimation $\hat{\pi}$ is unbiased and its mean square error or variance is the following formula:
\begin{align}
	Var[\hat{\pi}] & = \frac{-(\pi-\frac{1}{2})^2+ \frac{1}{4}}{n} + \frac{\frac{1}{4(2p-1)^2}-\frac{1}{4}}{n}\label{Eq:variance}
\end{align}
Within the privacy budget of $\epsilon$, the optimal privacy mechanism  is
\begin{equation*}
Q_{WRR} = \frac{1}{e^{\epsilon}+1}\left(
\begin{array}{cc}
e^{\epsilon} & 1 \\
1 & e^{\epsilon}
\end{array} \right).
\end{equation*}

\end{example}

Now we are generalizing the above Warner's model by allowing a third response ``I don't know" and representing the corresponding uncertainty with a mass function.
  Let  $Q_{2\times 3}$ denote a known row-stochastic matrix as follows:

\begin{equation*}
Q_{2\times 3} = \left(
\begin{array}{ccc}
p & q & 1-p-q \\
q & p & 1-p-q
\end{array} \right)
\end{equation*}
where $p, q\in [0,1]$. $Q_{2\times 3}$ may be regarded as a generalized Warner's randomized response mechanism where a respondent answers truthfully with probability $p$, tells a lie with $q$ and don't respond or respond "I don't know" with probability $1-p-q$.  We may assume in this paper that $p>\frac{1}{2}$.

\begin{remark} In the following we choose to work with such a simple form $Q_{2\times 3}$ of LDP for belief functions. A more general form  can be studied similarly, but unfortunately we couldn't obtain closed forms for (approximate) estimation and error as we achieve below for this simple form $Q_{2\times 3}$. The maximum likelihood estimation problem for the more general form can be naturally formalized as a mixture of the conditional mass functions associated with the evidential privacy mechanism with the mixture proportions as the unknown  prior distribution of the sensitive population.We can apply EM algorithm to approximate the prior distribution and compute its Fisher information and further the standard error of the approximation \cite{Agrawal2001design}. However, the simple form  provides us with a neat formula of estimation error (Theorem \ref{th:variance}) and hence a formula for the privacy-utility trade-off.  Indeed the simple form for evidential mechanism  is enough to illustrate the effect of the answer ``I don’t know" or nonresponse on the privacy-utility trade-off. Both the simulation experiments and Figure 2 afterwards are based on the above analysis.
 In this paper we mainly focus on this simple form $Q_{2\times 3}$. But we expect that such a simple form to evidential privacy mechanisms is the same as Warner’s $2\times 2$ mechanism to the standard LDP. For standard LDP, every approximate DP algorithm can be simulated by a (leaky) variant of Warner’s  $2\times 2$ mechanism (a well-known result in optimal composition \cite{MurtaghV18,Kairouz2017composition}). From a broader and deeper perspective, we believe that every approximate evidential privacy mechanism can be simulated by some variant of our $2\times 3$ mechanisms in this paper.  In this sense, our contribution is similar to Warner’s contribution to standard LDP.
\end{remark}

A simple random sample of $n$ people is drawn with replacement from
the population. Let $Z_i$ denote the $i$-th sample element. Recall that $\pi$ is the true proportion of the people with the sensitive property $P$. $Z_i$ is distributed according to the following $(q_1, q_2, q_3)$:

\[
\left(
\begin{array}{ccc}
q_1 & q_2 & q_3
\end{array}
\right)
=\left(
  \begin{array}{ccc}
    \pi & 1-\pi
      \end{array}
\right)
\left(
\begin{array}{ccc}
p & q & 1-p-q \\
q & p & 1-p-q
\end{array}
\right)
\]
In other words, $q_1 =\pi p +(1-\pi)q$, $q_2 = \pi q + (1-\pi) p$, and $q_3 = 1-p -q$.
 Note that $q_1+q_2+q_3=1$. It implies that $Z_i$ says ``Yes", ``No" and ``don't know" with probabilities $q_1, q_2$ and $q_3$ respectively. Arrange the indexing of the sample so that  the first $n_1$ sample elements say $Yes$, the next $n_2$ say $No$ and the last $n_3$ say ``don't know" where $n_1, n_2$ and $n_3$ are natural numbers such that $n_1 + n_2 +n_3 =n$.  So the likelihood of the sample is
	$L(\pi) = q_1^{n_1} q_2^{n_2} q_3^{n_3}$.  By taking its logarithm and then setting its derivative to be zero, we obtain $\frac{n_1}{q_1} - \frac{n_2}{q_2} =0$.  So we obtain the maximum likelihood estimation (MLE) of $\pi$ as follows:
\begin{align}
	\hat{\pi} & = \frac{n_2 q- n_1p}{(n_1+n_2)(q-p)}.
\end{align}
Now we want to compute the expectation of $\hat{\pi}$. From $Z_i$, we define three new random variables $Z_{i1} = \mathbb{I}_{[Z_i = Yes]}, Z_{i2} = \mathbb{I}_{[Z_i = No]}$ and $Z_{i3} = \mathbb{I}_{[Z_i = \text{don't know}]}$ (where $\mathbb{I}$ denotes the indicator function). Then $Z_i= Z_{i1}+ Z_{i2}+Z_{i3}, N_1 =  \sum_{i=1}^n Z_{i1}, N_2 = \sum_{i=1}^n Z_{i2}$ and $N_3 =  \sum_{i=1}^n Z_{i3}$. So $N_1+ N_2+N_3 =n$. We obtain the conditional expectation of the MLE.

\begin{theorem} $\mathbb{E}[\frac{N_2 q- N_1p}{(N_1+N_2)(q-p)}| N_1+N_2 \neq 0] = \pi$.
\end{theorem}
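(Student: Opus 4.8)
The plan is to reduce the statement to two conditional expectations of ratios and then handle the combinatorial core. Observe first that the estimator is an affine combination of the two fractions $N_1/(N_1+N_2)$ and $N_2/(N_1+N_2)$, namely
\[
\frac{N_2 q- N_1p}{(N_1+N_2)(q-p)} = \frac{1}{q-p}\left(q\,\frac{N_2}{N_1+N_2} - p\,\frac{N_1}{N_1+N_2}\right).
\]
By linearity of (conditional) expectation it therefore suffices to compute $\mathbb{E}[\tfrac{N_1}{N_1+N_2}\mid N_1+N_2\neq 0]$ and $\mathbb{E}[\tfrac{N_2}{N_1+N_2}\mid N_1+N_2\neq 0]$, and I expect both to equal the ``natural'' renormalized probabilities $\frac{q_1}{q_1+q_2}$ and $\frac{q_2}{q_1+q_2}$.

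The heart of the argument, and the step I expect to be the main obstacle, is the \emph{unconditional} combinatorial computation. Since $(N_1,N_2,N_3)$ is multinomial with parameters $n$ and $(q_1,q_2,q_3)$, I would write
\[
\mathbb{E}\Big[\tfrac{N_1}{N_1+N_2}\,\mathbb{I}_{[N_1+N_2\neq 0]}\Big]=\sum_{\substack{n_1+n_2+n_3=n\\ n_1+n_2\neq 0}}\frac{n_1}{n_1+n_2}\binom{n}{n_3}\binom{n_1+n_2}{n_1}q_1^{n_1}q_2^{n_2}q_3^{n_3},
\]
and then apply the absorption identity $\frac{n_1}{n_1+n_2}\binom{n_1+n_2}{n_1}=\binom{n_1+n_2-1}{n_1-1}$ to pull out a factor $q_1$. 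Summing the inner binomial sum over $n_1,n_2$ with $n_1+n_2$ fixed gives $(q_1+q_2)^{\,n_1+n_2-1}$ by the binomial theorem, and finally summing over $n_3$ collapses to $\frac{q_1}{q_1+q_2}\big[(q_1+q_2+q_3)^n-q_3^n\big]=\frac{q_1}{q_1+q_2}(1-q_3^n)$, using $q_1+q_2+q_3=1$. The symmetric calculation yields $\frac{q_2}{q_1+q_2}(1-q_3^n)$. The delicate bookkeeping here is keeping the constraint $n_1+n_2\neq 0$ (equivalently $n_3\neq n$) correct throughout, which is exactly what produces the $-q_3^n$ correction term.

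To finish, I would convert these unconditional quantities into conditional ones. Since $\{N_1+N_2=0\}=\{N_3=n\}$ has probability $q_3^n$, we have $\mathbb{P}(N_1+N_2\neq 0)=1-q_3^n$, so dividing cancels the $(1-q_3^n)$ factor and leaves $\mathbb{E}[\tfrac{N_1}{N_1+N_2}\mid N_1+N_2\neq 0]=\frac{q_1}{q_1+q_2}$ and likewise $\frac{q_2}{q_1+q_2}$. Substituting into the affine combination gives $\frac{1}{(q-p)(q_1+q_2)}(q\,q_2-p\,q_1)$. The last routine step is the algebraic simplification using $q_1=\pi p+(1-\pi)q$ and $q_2=\pi q+(1-\pi)p$: one checks that $q_1+q_2=p+q$ and that $q\,q_2-p\,q_1=\pi(q^2-p^2)=\pi(q-p)(q+p)$, whence the whole expression collapses to $\pi$, as claimed.
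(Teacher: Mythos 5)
Your proposal is correct and follows essentially the same route as the paper: the paper's supplementary Lemma 0.1 establishes $\mathbb{E}[\tfrac{N_1}{N_1+N_2}]=\tfrac{q_1}{q_1+q_2}(1-q_3^n)$ (and its symmetric counterpart) by exactly the multinomial-sum, absorption-identity, binomial-theorem computation you describe, and the theorem then follows by dividing by $\mathbb{P}(N_1+N_2\neq 0)=1-q_3^n$ and the same algebraic simplification with $q_1+q_2=p+q$ and $qq_2-pq_1=\pi(q^2-p^2)$. Your write-up is in fact more explicit than the paper's one-line deduction from the lemma, but the underlying argument is identical.
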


\begin{theorem} \label{th:variance}
$Var(\hat{\pi}|N_1+N_2\neq0) = \frac{1}{(q-p)^2}[\pi p+(1-\pi)q][\pi q+(1-\pi)p]A = [-(\pi-\frac{1}{2})^2 + \frac{1}{4} (\frac{p+q}{p-q})^2] A$ where $A= \sum_{0\leq N_3 <n}\frac{1}{n-N_3}$ ${n}\choose{N_3}$$ (1-q_3)^{n-N_3}q_3^{N_3}$.

\end{theorem}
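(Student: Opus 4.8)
The plan is to reduce the estimator to an affine function of a single binomial-type ratio and then isolate the one genuinely non-elementary sum. First I would rewrite the MLE: substituting $N_2=(N_1+N_2)-N_1$ in the numerator gives
\[
\hat\pi=\frac{(N_1+N_2)q-N_1(p+q)}{(N_1+N_2)(q-p)}=\frac{q}{q-p}-\frac{p+q}{q-p}\cdot\frac{N_1}{N_1+N_2},
\]
so $\hat\pi$ is affine in the ratio $R:=N_1/(N_1+N_2)$. Consequently, conditioning throughout on $N_1+N_2\neq 0$, I get $\mathrm{Var}(\hat\pi\mid N_1+N_2\neq 0)=\big(\tfrac{p+q}{q-p}\big)^2\,\mathrm{Var}(R\mid N_1+N_2\neq 0)$, and it remains only to compute the variance of $R$.

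Second, I would exploit the multinomial structure. Since $(N_1,N_2,N_3)\sim\mathrm{Multinomial}(n;q_1,q_2,q_3)$, conditioning on $N_3$ — equivalently on $M:=N_1+N_2=n-N_3$ — makes $N_1$ conditionally $\mathrm{Binomial}(M,\theta)$ with $\theta=q_1/(q_1+q_2)$. The decisive simplification is that $\mathbb{E}[R\mid M]=\theta$ does \emph{not} depend on $M$, so in the law of total variance the between-strata term vanishes and only the within-strata term $\mathrm{Var}(R\mid M=m)=\theta(1-\theta)/m$ survives, giving
\[
\mathrm{Var}(R\mid N_1+N_2\neq 0)=\theta(1-\theta)\,\mathbb{E}\Big[\tfrac{1}{N_1+N_2}\,\Big|\,N_1+N_2\neq 0\Big].
\]
Marginalizing $N_1$ out of this conditional expectation by the binomial theorem ($\sum_{n_1+n_2=m}\binom{m}{n_1}q_1^{n_1}q_2^{n_2}=(q_1+q_2)^m$) collapses everything to a sum over $N_3$ alone, which is exactly $A=\sum_{0\le N_3<n}\frac{1}{n-N_3}\binom{n}{N_3}(1-q_3)^{\,n-N_3}q_3^{\,N_3}$, once the probability $1-q_3^{\,n}$ of the conditioning event is accounted for. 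Using the identity $q_1+q_2=p+q$, the prefactor becomes $\big(\tfrac{p+q}{q-p}\big)^2\theta(1-\theta)=\frac{q_1q_2}{(q-p)^2}$, yielding the first displayed form $\frac{1}{(q-p)^2}[\pi p+(1-\pi)q][\pi q+(1-\pi)p]\,A$.

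Finally, the equivalence with the second expression is a routine algebraic identity. Writing $q_1+q_2=p+q$ and $q_1-q_2=(2\pi-1)(p-q)$ gives $q_1q_2=\tfrac14\big[(p+q)^2-(2\pi-1)^2(p-q)^2\big]$, so dividing by $(q-p)^2$ produces $\tfrac14\big(\tfrac{p+q}{p-q}\big)^2-(\pi-\tfrac12)^2$, matching $-(\pi-\tfrac12)^2+\tfrac14\big(\tfrac{p+q}{p-q}\big)^2$.

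The main obstacle is the $1/(N_1+N_2)$ factor: it is what blocks a closed form and forces the answer to be carried by the sum $A$, and it is also what makes the conditioning on $N_1+N_2\neq 0$ (with the attendant $1-q_3^{\,n}$ bookkeeping) delicate. The combinatorial work — the index shift $\frac{n_1}{n_1+n_2}\binom{n_1+n_2}{n_1}=\binom{n_1+n_2-1}{n_1-1}$ already used for the conditional mean, together with the binomial-theorem marginalization of $N_1$ at fixed $N_3$ — is precisely what reduces the resulting double sum to the single sum $A$.
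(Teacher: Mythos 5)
Your proposal is correct, and it reaches the result by a genuinely different route in the central step. The paper, like you, first rewrites $\hat\pi$ as an affine function of the ratio $N_2/(N_1+N_2)$ (you use $N_1/(N_1+N_2)$; this is symmetric) so that the problem reduces to the variance of that ratio. But the paper then computes $\mathbb{E}[(N_2/(N_1+N_2))^2]$ head-on, via a chain of five combinatorial lemmas: an integration/differentiation trick to evaluate $f(q_1,n_3)=\sum_{n_1+n_2=n-n_3,\,n_1\neq 0}n_1\binom{n_1+n_2-1}{n_1-1}q_1^{n_1-1}q_2^{n_2}$, followed by several binomial-sum identities that assemble into $\mathbb{E}[(N_1/(N_1+N_2))^2]=\frac{q_1q_2}{(q_1+q_2)^2}A+\frac{q_1^2}{(q_1+q_2)^2}(1-q_3^n)$. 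You instead invoke the law of total variance conditioned on $N_3$, using the multinomial fact that $N_1\mid(N_1+N_2=m)\sim\mathrm{Bin}(m,\theta)$ with $\theta=q_1/(q_1+q_2)$ independent of $m$, so the between-strata term vanishes and the whole computation collapses to $\theta(1-\theta)\,\mathbb{E}[1/(N_1+N_2)\mid N_1+N_2\neq 0]$, which is the single sum $A$ up to normalization. Your route is shorter and makes it conceptually transparent why only the harmonic-type sum $A$ survives; the paper's route is more elementary but requires the auxiliary identities. One point worth flagging: carried out exactly, your bookkeeping gives $\mathbb{E}[1/(N_1+N_2)\mid N_1+N_2\neq 0]=A/(1-q_3^n)$ and hence $\mathrm{Var}(\hat\pi\mid N_1+N_2\neq 0)=\frac{q_1q_2}{(q-p)^2}\cdot\frac{A}{1-q_3^n}$, which differs from the theorem's displayed formula by the factor $1/(1-q_3^n)$. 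This is not a defect of your argument: the paper's own derivation is inconsistent about this factor (its Lemma on $\mathbb{E}[(N_2/(N_1+N_2))^2]$ carries the $(1-q_3^n)$ terms, but the final chain silently drops the normalization), and the factor is exponentially close to $1$. Your version is, if anything, the more careful one.
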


The formula in Theorem \ref{th:variance} is essential to our analysis of the trade-off between privacy loss and estimation accuracy. One may refer to the supplementary materials for a detailed proof (of independent interest). In this paper, we adopt  from \cite{GrabS54} a good approximation of $A$ as $\frac{1}{(n+1)(p+q)-1}$. In particular, with this approximation,   when $p+q=1$, $Var[\hat{\pi}| N_1+N_2 \neq 0] = \frac{-(\pi-\frac{1}{2})^2 + \frac{1}{4}\frac{1}{(2p-1)^2}}{n}$, which is  exactly the estimation error of Warner's model ( Eq. (\ref{Eq:variance})).

 \begin{corollary} Let $f(q)= \frac{-(\pi-\frac{1}{2})^2+ \frac{1}{4}(\frac{p+q}{p-q})^2}{(n+1)(p+q)-1}$. Then $f'(q) >0$. In other words, $Var(\hat{\pi})$ is increasing with respect to $q$.
\end{corollary}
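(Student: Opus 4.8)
The plan is to treat $f$ as a ratio $f = N/D$, with numerator $N(q) = -(\pi-\tfrac12)^2 + \tfrac14\big(\tfrac{p+q}{p-q}\big)^2$ and denominator $D(q) = (n+1)(p+q)-1$, and to prove $f'>0$ via the quotient rule. Since $D(q)^2 > 0$, it suffices to show that the numerator of $f'$, namely $\Phi(q) := N'(q)D(q) - N(q)D'(q)$, is strictly positive. First I would record the elementary facts forced by the hypotheses: from $p > \tfrac12$ and $0 \le q \le 1-p$ we get $0 \le q < p$, hence $p-q > 0$ and $p+q \ge p > 0$. These guarantee that every denominator appearing below is positive and can be cleared without reversing an inequality.

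Next I would compute the two derivatives. A direct chain-rule calculation, using $\tfrac{d}{dq}\tfrac{p+q}{p-q} = \tfrac{2p}{(p-q)^2}$, gives $N'(q) = \tfrac{p(p+q)}{(p-q)^3} > 0$, while $D'(q) = n+1$. Substituting these into $\Phi$ and multiplying through by $(p-q)^3 > 0$, which does not affect the sign, turns the target into a rational inequality with polynomial content. Collecting terms, I would write $(p-q)^3\,\Phi(q) = \Psi(q) + (\pi - \tfrac12)^2 (n+1)(p-q)^3$, where $\Psi(q)$ gathers all the $\pi$-free terms. Because the second summand is manifestly nonnegative, it suffices to prove $\Psi(q) > 0$; in effect this isolates the worst case $\pi = \tfrac12$.

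The remaining step is to show $\Psi(q) > 0$. Writing $s = p+q$ and using $4p - (p-q) = 3p+q$, the inequality $\Psi(q)>0$ simplifies, after dividing by $s>0$, to
\[
(n+1)(p+q)(3p+q) > 4p .
\]
Here both factors on the left are increasing in $q$, so the left-hand side is minimized at $q = 0$, where the inequality reads $3(n+1)p^2 > 4p$, i.e.\ $p > \tfrac{4}{3(n+1)}$. For a sample of size $n \ge 2$ one has $\tfrac{4}{3(n+1)} \le \tfrac49 < \tfrac12 < p$, so the inequality holds throughout the admissible range of $q$; hence $\Phi(q) > 0$ and $f'(q) > 0$.

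I expect the main obstacle to be the constant $-1$ in $D(q)$: it produces the negative contribution $-N'(q)$ inside $\Phi$, and it is exactly this term that prevents $f'>0$ from following merely from the positivity of $D$. Controlling it is what forces the mild condition $n \ge 2$ (for $n=1$ with $\pi = \tfrac12$ and $p$ only slightly above $\tfrac12$ one can check that $f$ is in fact non-monotonic near $q=0$), whereas if one replaces $D$ by its leading term $(n+1)(p+q)$, the elementary bound $\tfrac{p}{p-q} > \tfrac14$ makes $f'>0$ immediate. Since $n$ is a genuine sample size in the estimation problem, the restriction is harmless, and the reduction above is the route I would take.
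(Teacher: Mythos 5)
Your proof is correct and is considerably more careful than the one the paper actually gives. The paper's entire argument is the single asserted identity $f'(q) = \frac{(n+1)(q+3p)+(\pi-1/2)(p+q)}{(p-q)^3(n+1)}$, which cannot be right as written: the quotient rule forces a denominator $\left[(n+1)(p+q)-1\right]^2$, the dependence on $\pi$ must be through $(\pi-\tfrac12)^2$ (so the sign of $\pi-\tfrac12$ cannot enter), and, most importantly, the contribution of the constant $-1$ in $D(q)$ --- your $-4p$ in the factorization $(p-q)^3\Phi(q)=(p+q)\left[(n+1)(p+q)(3p+q)-4p\right]/4+(\pi-\tfrac12)^2(n+1)(p-q)^3$ --- has been dropped. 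Your decomposition into a $\pi$-free part $\Psi$ plus a manifestly nonnegative term, and the reduction of $\Psi>0$ to $(n+1)(p+q)(3p+q)>4p$, is exactly the bookkeeping the paper skips, and it buys you the honest observation that the corollary as literally stated fails for small $n$: with $n=1$, $\pi=\tfrac12$, $p=0.55$ and $q$ near $0$ one computes $\Phi<0$, so $f$ is decreasing there. The statement really needs $3(n+1)p>4$, which $n\ge 2$ together with $p>\tfrac12$ guarantees; since $n$ is a sample size this is harmless, but it is a genuine correction to the paper rather than a defect of your argument.
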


This proposition tells us that, within the privacy budget of $\epsilon$, one can increase the estimation accuracy by saying ``I don't know"  as much as possible instead of lying.

\begin{corollary}
    Fix $p+q =c$. The optimal $\epsilon$-LDP mechanism is
    \begin{equation*}
Q_{GWRR}= \left(
\begin{array}{ccc}
\frac{e^{\epsilon}}{e^{\epsilon}+1} c & \frac{1}{e^{\epsilon}+1} c & 1-c \\
\frac{1}{e^{\epsilon}+1} c & \frac{e^{\epsilon}}{e^{\epsilon}+1} c  & 1-c
\end{array} \right)
\end{equation*}
\end{corollary}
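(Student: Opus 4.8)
The plan is to treat this as a constrained optimization: among all mechanisms of the shape $Q_{2\times 3}$ with $p+q=c$ fixed that meet the $\epsilon$-SLDP budget, isolate the one minimizing the estimation error of Theorem \ref{th:variance}. First I would compute the privacy loss $\epsilon^Q_S$ explicitly. The focal sets carrying positive mass in each row are the two singletons $\{y_1\}$ (``Yes'') and $\{y_2\}$ (``No''), together with the full frame $Y=\{y_1,y_2\}$ (the ``don't know'' column of mass $1-p-q$). The three mass ratios between the two rows are $p/q$, $q/p$ and $\tfrac{1-p-q}{1-p-q}=1$, so $r^Q_S=\max\{p/q,\,q/p\}$. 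Since $p>\tfrac12$ and $p+q=c\le 1$ force $q=c-p<\tfrac12<p$, we have $p>q$ and hence $\epsilon^Q_S=\ln(p/q)>0$. Consequently the lower bound $-\epsilon\le\epsilon^Q_S$ holds automatically and the budget collapses to the single inequality $\ln(p/q)\le\epsilon$, i.e. $p\le q\,e^{\epsilon}$.

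Next I would invoke the monotonicity established in the preceding corollary. With $p+q=c$ held constant, the error equals $f(q)$ with $f'(q)>0$, so the variance is strictly increasing in $q$ (equivalently strictly decreasing in the truth-lie gap $p-q=c-2q$). Minimizing the error is therefore the same as making $q$ as small as the privacy constraint permits. This matches the intuition recorded after that corollary: with the combined ``yes/no'' budget $c$ fixed, one should spend it as asymmetrically as possible, raising the truth-telling probability and lowering the lying probability.

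Then I would intersect the monotonicity with the privacy constraint. Substituting $p=c-q$ into $p\le q\,e^{\epsilon}$ yields $c-q\le q\,e^{\epsilon}$, that is $q\ge\frac{c}{e^{\epsilon}+1}$. Because $f$ is increasing, the minimizer is attained at the left endpoint $q^{*}=\frac{c}{e^{\epsilon}+1}$, which saturates the budget ($\epsilon^Q_S=\epsilon$). The paired value is $p^{*}=c-q^{*}=\frac{c\,e^{\epsilon}}{e^{\epsilon}+1}$, while the ``don't know'' mass is unchanged at $1-p^{*}-q^{*}=1-c$. Inserting $(p^{*},q^{*},1-c)$ into the two identical rows produces exactly $Q_{GWRR}$, which finishes the argument.

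The step I expect to be the main obstacle is the first one: certifying that the privacy loss is controlled solely by the singleton columns. One must verify that the ``don't know'' column never inflates $r^Q_S$ (its ratio is identically $1$) and that no other non-singleton focal set contributes. Here this is immediate, since the full frame $Y$ is the only non-singleton focal set with positive mass; but for a more general evidential mechanism this accounting over all $2^l-1$ focal sets is the delicate point, and it is also where one would have to confirm that the optimum truly lies on the boundary of the feasible region rather than at an interior critical point (a subtlety that disappears here precisely because $f$ is monotone).
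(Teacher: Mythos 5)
Your argument is correct and is in substance the proof the paper intends: the paper's own proof simply defers to ``the same arguments as for $Q_{WRR}$'' in Warner and Holohan et al., i.e.\ the variance is monotone in the truth--lie split so the optimum saturates the privacy budget, which is exactly the boundary argument you make explicit via $f'(q)>0$ and the constraint $q\ge \frac{c}{e^{\epsilon}+1}$. The only detail you supply beyond what the paper cites is the (correct) check that the ``don't know'' column has mass ratio $1$ between the two rows, so that $r^Q_S=\max\{p/q,q/p\}=p/q$ and the budget reduces to $p\le q e^{\epsilon}$.
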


 In order to emphasize the dependency of the privacy matrix $Q_{2\times 3}$ on the parameters $p$ and $q$, we denote $Q_{2\times 3}$ as $Q_{2\times 3}(p,q)$, the privacy loss $ln(\frac{p}{q})$ as $\epsilon^S(p,q)$ and the estimation error $Var(\hat{\pi}|N_1+N_2\neq0)$ as $\nu^S(p,q)$.

This trade-off formula can be actually easily obtained. What we can achieve is an analysis rather than simulation. Let $p+q=c$ and $e^{\epsilon}= \frac{p}{q} =\frac{p}{1-c-p}$. So we get $p=\frac{1-c}{e^{-\epsilon}+1}$. If we substitute this formula into the error formula in Theorem \ref{th:variance}, then we get a formula of estimation error in terms of the privacy loss. 
Simulation experiments are carried out to verify the trade-off in the privacy mechanism. In order to reduce the sampling error on the experimental results, the following  results are the average of 1000 experimental outcomes.

\begin{figure}[htbp]
\centering
\includegraphics[height=5.0cm,width=6cm]{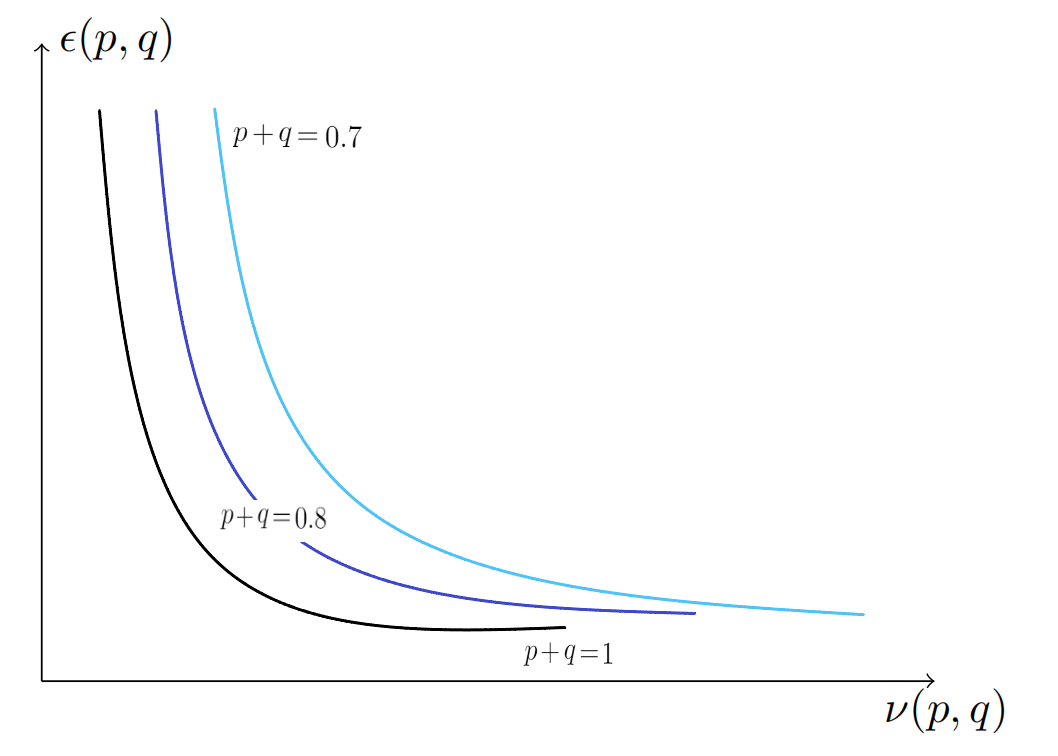}
\caption{The trade-off in Shafer's semantics}
\label{fig:2}
\end{figure} 

The trade-off between the privacy loss $\epsilon^S(p,q)$ and the accuracy $\nu^S(p,q)$ can be illustrated in the following Figure 1. The figure shows clearly the impact of ``don't know" with probability $1-p-q$ on the trade-off between $\epsilon^S(p,q)$ and $\nu^S(p,q)$.  When $1-p-q=0$ or $p+q=1$, the black curve for the trade-off between $\epsilon^S(p,q)$ and $\nu^S(p,q)$ is exactly for Warner's randomized response mechanism. If $p+q =c$ where $c$ is a constant, the trade-off curve is similar to that for Warner's mechanism. Moreover, when the constant $c$ gets smaller or the probability of ``don't know" gets larger,  the curve moves further away from that for Warner's model.
 Figure \ref{fig:2} tells us that Warner's model is optimal among those generalized $Q_{2\times 3}$-mechanisms. Next we explore the effect of the sample size on the accuracy of the estimation. We set the sample size to be 10, 100, 500, 1000 and fix $q_3 = 0.1$. From the experimental results (Figure \ref{fig:3}), we can see that when the privacy loss is relatively large, different sample sizes can achieve similar estimations. However, when the privacy budget is relatively small, with the increase of the sample size, the estimation variance gets smaller and smaller.

\begin{figure*}[htbp]
\centering
\subfigure[N=10]{
\begin{minipage}[t]{0.25\linewidth}
\centering
\includegraphics[height=3.9cm,width=3.9cm]{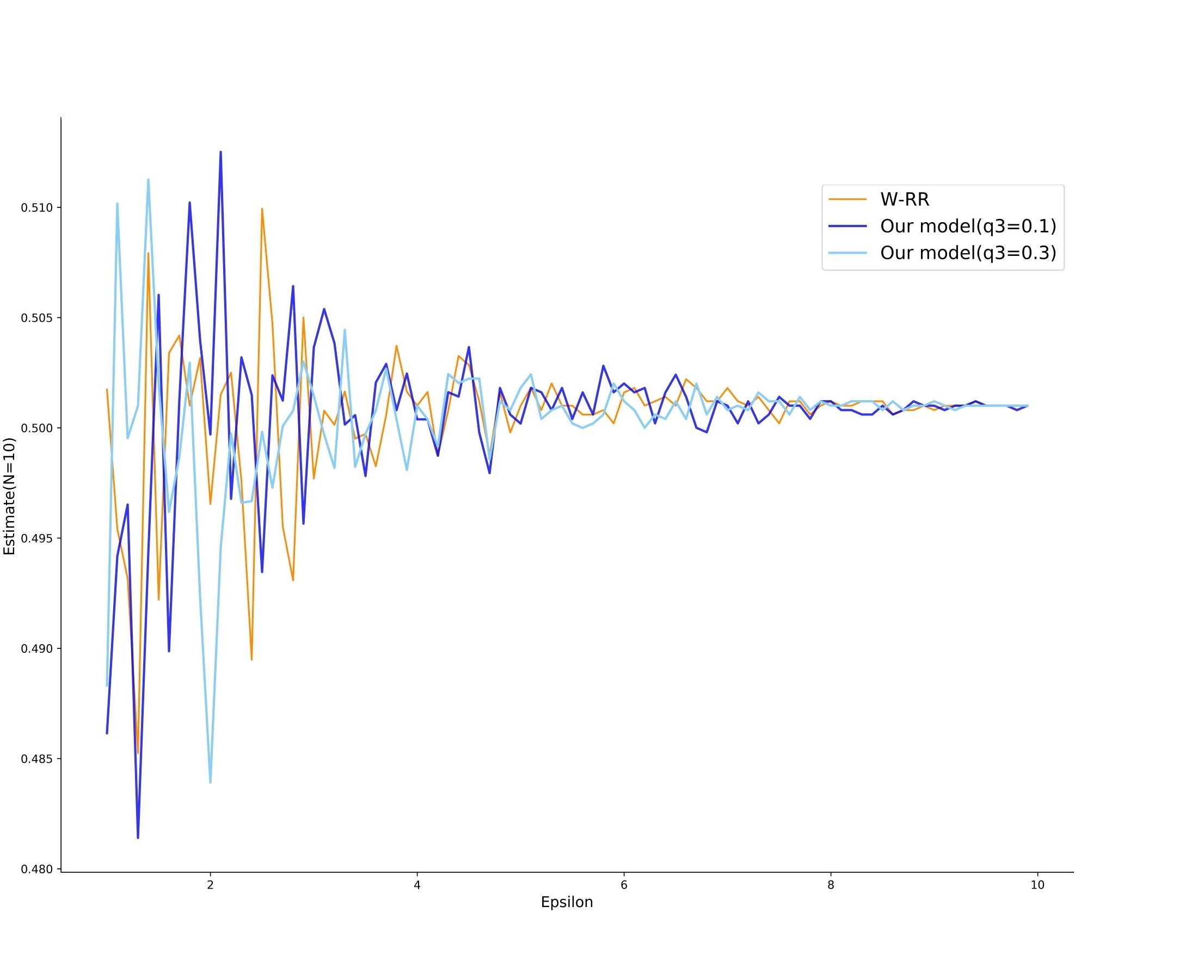}
\end{minipage}%
}%
\subfigure[N=100]{

\begin{minipage}[t]{0.25\linewidth}
\centering
\includegraphics[height=3.9cm,width=3.9cm]{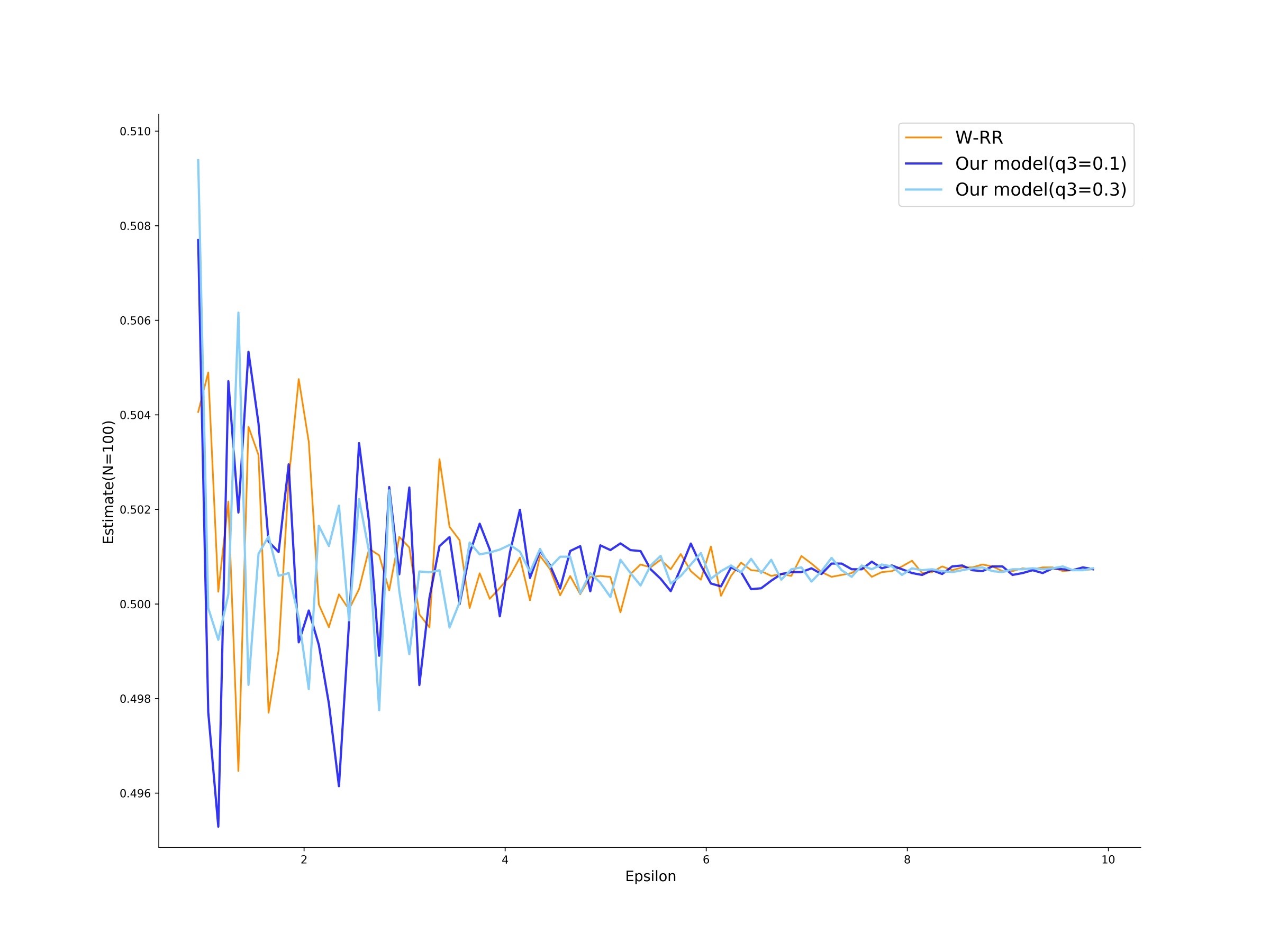}
\end{minipage}%
}%
\subfigure[N=500]{
\begin{minipage}[t]{0.25\linewidth}
\centering
\includegraphics[height=3.9cm,width=3.9cm]{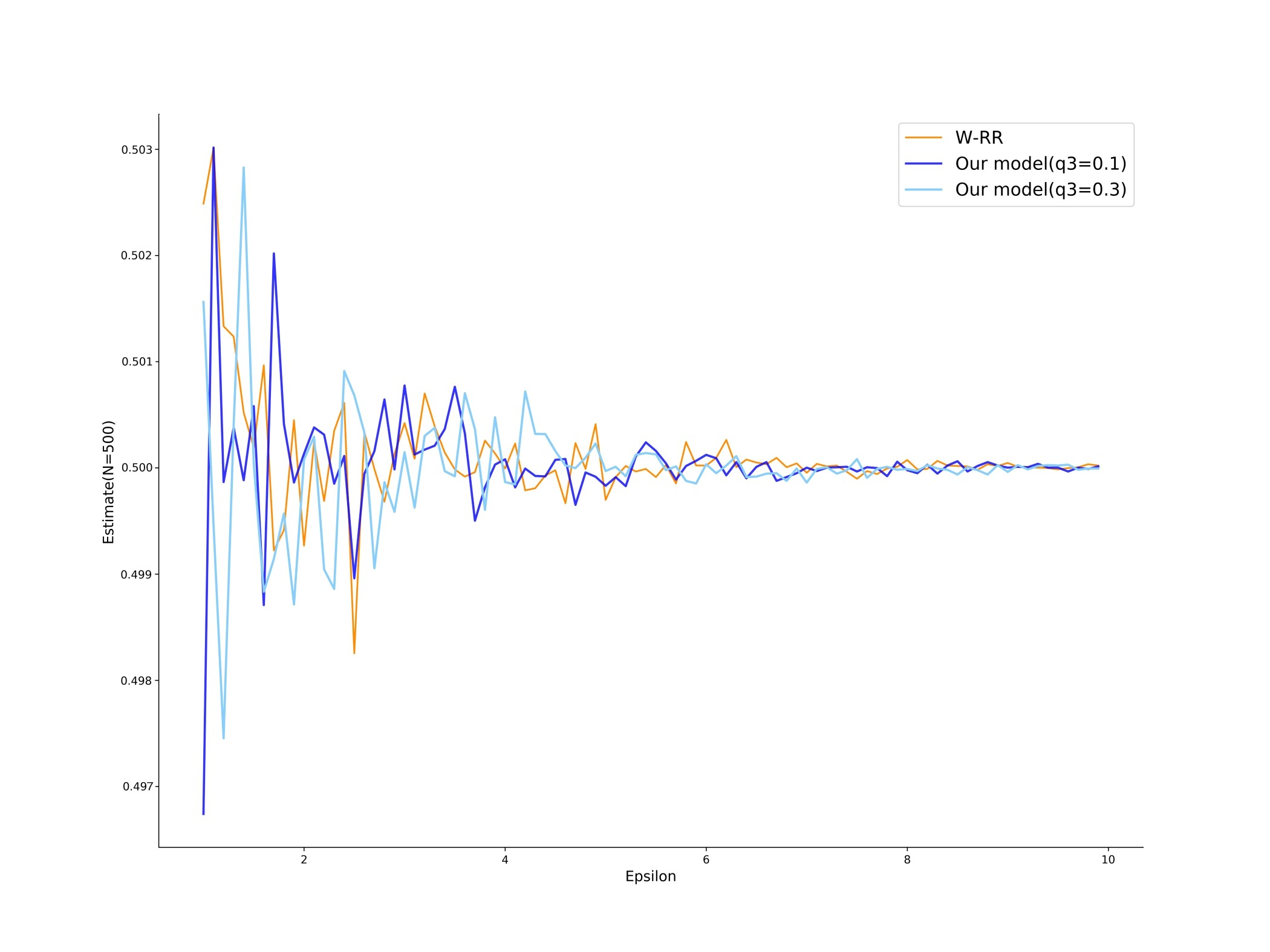}
\end{minipage}
}%
\subfigure[N=1000]{
\begin{minipage}[t]{0.25\linewidth}
\centering
\includegraphics[height=3.9cm,width=3.9cm]{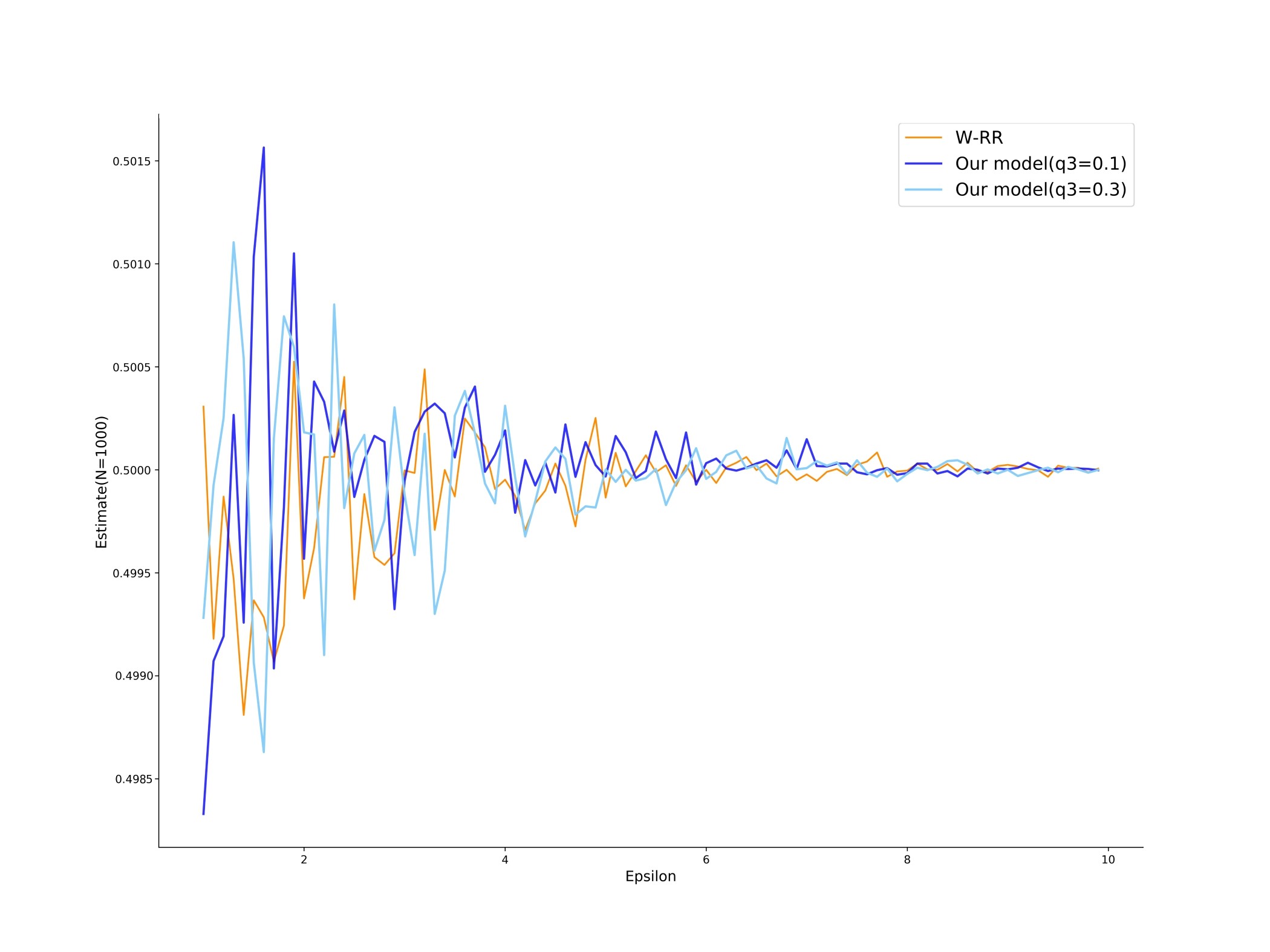}
\end{minipage}
}%

\centering
\caption{ Impact of sample sizes on the estimation accuracy}
\label{fig:3}
\end{figure*}


\subsection{LDP according to Walley}

For an evidential privacy mechanism $Q$, let
$r_Q^W = max_{pr_x\in \mathcal{P}_{bel^Q_x},pr_{x'}\in \mathcal{P}_{bel^Q_{x'}} }\frac{pr_x(E)}{pr_{x'}(E)}$.
And the logarithm $\epsilon_Q^W = ln(r_Q^W)$ quantifies the privacy loss of the privacy mechanism $Q$ in Walley's semantics of imprecise probabilities.
There is another definition of LDP for belief functions in the setting of imprecise probabilities:

\begin{definition} \label{def:WLDP} For any $\epsilon >0$,
$Q$ is called \emph{$\epsilon$-locally differential private} according to Walley ($\epsilon$-$WLDP$ for short) if, 
$ -\epsilon \leq \epsilon_Q^W \leq \epsilon$.
 And $\epsilon^Q_W$ is called the \emph{privacy loss} of $Q$ according to Walley and $\epsilon$ is a \emph{privacy budget}.
\end{definition}
In other words, the privacy loss for $\epsilon$-$WLDP$ is defined by consistent probability functions \emph{in the worst case}. So, $\epsilon$-$WLDP$ fits well with the worst-case analysis behind the philosophy of differential privacy and also with the \emph{conservative} principle of least commitment in the theory of belief functions \cite{Denoeux14}.  Lemma \ref{le:alternation1} and the following Lemma \ref{le:charaterization_Walley} provide a simple mathematical characterization of SLDP and WLDP, where we can see clearly the main difference between Definitions \ref{def:SLDP} and \ref{def:WLDP}.

\begin{lemma} \label{le:charaterization_Walley}(Alternative formulations) If privacy mechanism $Q$ is $\epsilon$-$WLDP$, then, for all $x, x'\in X$ and $E\subseteq Y$:
 $e^{-\epsilon} \leq \frac{pl_x(E)}{bel_{x'}(E)} \leq e^{\epsilon}$.
\end{lemma}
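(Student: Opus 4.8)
The plan is to reduce the statement to the classical fact that a belief function is the \emph{lower} envelope of its credal set while the associated plausibility function is the \emph{upper} envelope. Concretely, the key ingredient I would establish (it is standard for totally monotone capacities; see \cite{Walley90,Cuzzolin21}) is that for every belief function $bel$ on $Y$ with mass function $m$ and plausibility $pl$, and for every nonempty $E\subseteq Y$,
\[
bel(E) = \min_{pr\in\mathcal{P}_{bel}} pr(E), \qquad pl(E) = \max_{pr\in\mathcal{P}_{bel}} pr(E).
\]
The inequality $bel(E)\le pr(E)$ for all $pr\in\mathcal{P}_{bel}$ is immediate from the definition $\mathcal{P}_{bel}=\{pr:pr\ge bel\}$, so $bel(E)\le\min_{pr}pr(E)$. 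For the reverse I would exhibit, for each fixed $E$, a dominating probability attaining equality by allocating the full mass $m(B)$ of every focal set $B$ to a single point of $B$, chosen inside $E$ when $B\subseteq E$ and in $B\setminus E$ otherwise; then $pr(E)=\sum_{B\subseteq E}m(B)=bel(E)$, and $pr$ dominates $bel$ because any focal $B\subseteq A$ has its chosen point in $A$, whence $pr(A)\ge\sum_{B\subseteq A}m(B)=bel(A)$. The plausibility identity follows by the duality $pl(E)=1-bel(Y\setminus E)$ together with $\max_{pr}pr(E)=1-\min_{pr}pr(Y\setminus E)$. I expect this envelope characterization to be the only nontrivial input and the place where the real work lies; note also that the standing assumption that all subsets are focal guarantees $bel_{x'}(E)>0$ for nonempty $E$, so the ratio in the statement is well defined.

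Next I would unpack the hypothesis. Since $r_Q^W$ is a maximum of ratios that is invariant under exchanging the roles of $x$ and $x'$, we have $r_Q^W\ge 1$, so $\epsilon_Q^W=\ln(r_Q^W)\ge 0$ and the condition $-\epsilon\le\epsilon_Q^W\le\epsilon$ is equivalent to $r_Q^W\le e^{\epsilon}$. Hence $\epsilon$-$WLDP$ says precisely that for all $x,x'\in X$, all nonempty $E\subseteq Y$, and all consistent probabilities $pr_x\in\mathcal{P}_{bel^Q_x}$ and $pr_{x'}\in\mathcal{P}_{bel^Q_{x'}}$,
\[
e^{-\epsilon}\le\frac{pr_x(E)}{pr_{x'}(E)}\le e^{\epsilon},
\]
where the lower bound is obtained from the same inequality after interchanging $x$ and $x'$.

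Finally I would combine the two ingredients. Fix $x,x',E$, and use the envelope characterization to choose $pr_x^*\in\mathcal{P}_{bel^Q_x}$ with $pr_x^*(E)=pl_x(E)$ and $pr_{x'}^*\in\mathcal{P}_{bel^Q_{x'}}$ with $pr_{x'}^*(E)=bel_{x'}(E)$. Applying the $WLDP$ ratio bound to this specific pair gives
\[
\frac{pl_x(E)}{bel_{x'}(E)}=\frac{pr_x^*(E)}{pr_{x'}^*(E)}\le e^{\epsilon},
\]
which is the upper bound, while the reversed ratio $pr_{x'}^*(E)/pr_x^*(E)=bel_{x'}(E)/pl_x(E)\le e^{\epsilon}$ yields $pl_x(E)/bel_{x'}(E)\ge e^{-\epsilon}$, the lower bound. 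The main obstacle is therefore not the concluding algebra, which is a one-line substitution of the extremal dominating probabilities into the $WLDP$ hypothesis, but rather securing the lower/upper envelope identities; once those are in hand the lemma is immediate.
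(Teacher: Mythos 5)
Your proposal is correct and follows essentially the same route as the paper: both reduce the lemma to the fact that $bel$ and $pl$ are the attained lower and upper envelopes of the credal set $\mathcal{P}_{bel}$, and then plug the two extremal dominating probabilities into the $WLDP$ ratio bound. The only difference is that the paper obtains the extremal probabilities by citing specialization matrices from Klawonn--Smets, whereas you construct them explicitly by concentrating each focal mass on a suitably chosen point, which makes the argument self-contained but is the same idea.
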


\begin{lemma} \label{le:composition}(Composition) Let $Q_1$ be an $\epsilon_1$-WLDP  evidential privacy mechanism from $X$ to $Y_1$ and $Q_2$ be an $\epsilon_2$-WLDP  evidential privacy mechanisms from $X$ to $Y_2$.  Then their combination $Q_{1,2}$ defined by $Q_{1,2}(x) = (Q_1(x),Q_2(x) )$ is $\epsilon_1 + \epsilon_2$-WLDP.
\end{lemma}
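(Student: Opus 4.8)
The plan is to reduce the statement to the plausibility--belief characterization and then exploit that, on the product space $Y_1\times Y_2$, plausibility and belief \emph{factorize on rectangles} while being respectively sub- and super-additive on disjoint unions. First I would record the exact characterization behind $\epsilon$-WLDP: since $\max_{pr\geq bel}pr(E)=pl(E)$ and $\min_{pr\geq bel}pr(E)=bel(E)$ (the classical upper/lower envelope identities for a credal set), the quantity $r^W_Q$ equals $\max_{x,x',E}pl_x(E)/bel_{x'}(E)$. Hence $Q$ is $\epsilon$-WLDP \emph{if and only if} $pl_x(E)\leq e^{\epsilon}bel_{x'}(E)$ for all $x,x'$ and all $E$; Lemma~\ref{le:charaterization_Walley} gives one direction, and the same envelope computation gives the converse, so I may use this as an equivalence. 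The whole lemma then reduces to proving, for the combined mechanism, that $pl^{1,2}_x(F)\leq e^{\epsilon_1+\epsilon_2}bel^{1,2}_{x'}(F)$ for every $x,x'\in X$ and every $F\subseteq Y_1\times Y_2$, where $Q_{1,2}(x)=(Q_1(x),Q_2(x))$ is the independent random-set product, i.e. $m^{1,2}_x(E_1\times E_2)=m^1_x(E_1)m^2_x(E_2)$ and $m^{1,2}_x$ vanishes on non-rectangular subsets.

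Next I would establish two structural facts. First, factorization on rectangles: $pl^{1,2}_x(E_1\times E_2)=pl^1_x(E_1)pl^2_x(E_2)$ and $bel^{1,2}_x(E_1\times E_2)=bel^1_x(E_1)bel^2_x(E_2)$, both immediate from splitting the defining sums, using $(A_1\times A_2)\cap(E_1\times E_2)=(A_1\cap E_1)\times(A_2\cap E_2)$ and the fact that $A_1\times A_2\subseteq E_1\times E_2$ iff $A_1\subseteq E_1$ and $A_2\subseteq E_2$. Second, plausibility is subadditive and belief is superadditive on disjoint unions, both read off directly from the mass functions. Writing $F_{y_1}:=\{y_2:(y_1,y_2)\in F\}$ for the column of $F$ above $y_1$, I decompose $F=\bigsqcup_{y_1\in Y_1}(\{y_1\}\times F_{y_1})$ and chain:
\[
pl^{1,2}_x(F)\;\leq\;\sum_{y_1}pl^{1,2}_x(\{y_1\}\times F_{y_1})\;=\;\sum_{y_1}pl^1_x(\{y_1\})\,pl^2_x(F_{y_1})
\]
by subadditivity and factorization. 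Then I apply the one-coordinate WLDP bounds $pl^1_x(\{y_1\})\leq e^{\epsilon_1}bel^1_{x'}(\{y_1\})$ and $pl^2_x(F_{y_1})\leq e^{\epsilon_2}bel^2_{x'}(F_{y_1})$, refactor into $bel^{1,2}_{x'}(\{y_1\}\times F_{y_1})$, and finish with superadditivity of belief on the disjoint union, $\sum_{y_1}bel^{1,2}_{x'}(\{y_1\}\times F_{y_1})\leq bel^{1,2}_{x'}(F)$, which yields the target inequality. The multi-fold case follows by an easy induction.

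The main obstacle -- and the reason a naive argument fails -- is the mismatch, for non-rectangular $F$, between the focal sets counted by plausibility (those \emph{meeting} $F$) and those counted by belief (those \emph{contained in} $F$). One cannot bound mass against mass as in the SLDP composition, nor reduce to standard LDP composition by writing $pl^{1,2}_x(F)$ as a supremum over products of marginal probabilities: the credal set of the random-set product is strictly larger than the set of product measures, so $pl^{1,2}_x(F)$ can strictly exceed $\sup_{pr_1,pr_2}(pr_1\otimes pr_2)(F)$. The device that absorbs this gap is exactly the sub-/super-additivity sandwich applied to the column decomposition, so this is where I would concentrate the care; the factorization and envelope steps are routine.
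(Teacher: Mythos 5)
Your proof is correct, and it does substantially more work than the paper does. The paper's own argument for this lemma is a one-line reference back to the SLDP composition proof, which rests entirely on the multiplicativity of mass ratios on rectangles, $\frac{m^{Q_{12}}_x(E_1\times E_2)}{m^{Q_{12}}_{x'}(E_1\times E_2)}=\frac{m^{Q_{1}}_x(E_1)}{m^{Q_{1}}_{x'}(E_1)}\cdot\frac{m^{Q_{2}}_{x}(E_2)}{m^{Q_{2}}_{x'}(E_2)}$. As you correctly observe, that argument does not transfer verbatim: WLDP is a constraint of the form $pl_x(F)\leq e^{\epsilon}bel_{x'}(F)$ on \emph{arbitrary} subsets of $Y_1\times Y_2$, a bound on mass ratios over rectangles does not by itself control $pl$ against $bel$ on non-rectangular $F$ (indeed the paper's own Lemma \ref{le:alternation1} only yields $bel$-against-$bel$ and $pl$-against-$pl$ ratios from mass ratios), and the credal set of the product random set strictly contains the set of product measures, so one cannot simply compose the two marginal worst cases. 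Your three ingredients --- the envelope identification $r^W_Q=\max_{x,x',E}pl_x(E)/bel_{x'}(E)$, factorization of $pl$ and $bel$ on rectangles, and the sub-/super-additivity sandwich applied to the column decomposition $F=\bigsqcup_{y_1}\{y_1\}\times F_{y_1}$ --- supply exactly the missing bridge, and each step checks out (including the degenerate cases $F_{y_1}=\emptyset$ and $bel_{x'}(E)=0$). What the paper's terse route buys is brevity and a uniform template with the SLDP case; what yours buys is an actual complete proof of the WLDP statement, and it would be a worthwhile addition to the supplementary material essentially as written.
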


\begin{lemma} \label{le:post}(Post-processing) Let $Q$ be an $\epsilon$-WLDP  mechanism from $X$ to $Y$ and $f$ is a data-independent randomized algorithm from $Y$ to another finite alphabet set $Z$. Then $f\circ Q$ is an $\epsilon$-WLDP mechanism from $X$ to $Z$.
\end{lemma}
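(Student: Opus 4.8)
The plan is to reduce the statement to the ordinary post-processing property of differential privacy, carried out at the level of the dominating probability functions and then transferred to the upper/lower envelopes via Lemma \ref{le:charaterization_Walley}. Write the data-independent algorithm $f$ as a row-stochastic channel $f(z\mid y)$, and for $F\subseteq Z$ set $f(F\mid y)=\sum_{z\in F}f(z\mid y)$; the pushforward of a probability function $pr$ on $Y$ is $(f_*pr)(F)=\sum_{y\in Y}pr(y)\,f(F\mid y)$. Under the imprecise-probability (Walley) reading, the composite mechanism $f\circ Q$ at input $x$ is the credal set obtained by pushing every $pr\in\mathcal{P}^Q_x$ through $f$, so that its envelopes are $bel^{f\circ Q}_x(F)=\inf_{pr\in\mathcal{P}^Q_x}(f_*pr)(F)$ and $pl^{f\circ Q}_x(F)=\sup_{pr\in\mathcal{P}^Q_x}(f_*pr)(F)$.

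First I would unpack the hypothesis. Since $Q$ is $\epsilon$-WLDP, for all $x,x'$ and all $E\subseteq Y$ every pair $pr_x\in\mathcal{P}^Q_x$, $pr_{x'}\in\mathcal{P}^Q_{x'}$ satisfies $e^{-\epsilon}\le pr_x(E)/pr_{x'}(E)\le e^{\epsilon}$; specialising to singletons $E=\{y\}$ yields the pointwise bound $e^{-\epsilon}\le pr_x(y)/pr_{x'}(y)\le e^{\epsilon}$. In other words, any dominating $pr_x$ and any dominating $pr_{x'}$ are $\epsilon$-indistinguishable in the ordinary sense.

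Next I would apply the standard pushforward estimate. Fix $x,x'$, choose any $pr\in\mathcal{P}^Q_x$, $pr'\in\mathcal{P}^Q_{x'}$ and $F\subseteq Z$; the pointwise bound gives
\[ (f_*pr)(F)=\sum_{y}pr(y)f(F\mid y)\le e^{\epsilon}\sum_{y}pr'(y)f(F\mid y)=e^{\epsilon}(f_*pr')(F). \]
Holding $pr'$ fixed and taking the supremum over $pr$ on the left, then the infimum over $pr'$ on the right, turns this into $pl^{f\circ Q}_x(F)\le e^{\epsilon}\,bel^{f\circ Q}_{x'}(F)$. Since this holds for every ordered pair $(x,x')$ and every $F\subseteq Z$, and since by Lemma \ref{le:charaterization_Walley} the Walley privacy loss is computed as a supremum over dominating pairs, hence $r^W_{f\circ Q}=\max_{x,x',F}pl^{f\circ Q}_x(F)/bel^{f\circ Q}_{x'}(F)\le e^{\epsilon}$, we obtain $-\epsilon\le\epsilon^W_{f\circ Q}\le\epsilon$, i.e.\ $f\circ Q$ is $\epsilon$-WLDP.

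I expect the main obstacle to be the very first paragraph rather than the estimate: one must pin down the definition of the composite evidential mechanism and verify that pushing the credal set of $Q$ through the channel $f$ really does produce the envelopes $\inf_{pr}(f_*pr)$ and $\sup_{pr}(f_*pr)$ on $Z$ — that is, that taking lower/upper envelopes commutes with post-composition by the data-independent kernel $f$. Once this commutation is secured, the remaining sup/inf bookkeeping is routine, and data-independence of $f$ is used precisely to guarantee that one and the same channel acts on both $\mathcal{P}^Q_x$ and $\mathcal{P}^Q_{x'}$.
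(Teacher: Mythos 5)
Your proposal is correct and follows essentially the same route as the paper: both arguments reduce the claim to the ordinary post-processing inequality applied to the dominating probability functions, using that $\epsilon$-WLDP makes \emph{every} pair $pr_x\in\mathcal{P}_{bel^Q_x}$, $pr_{x'}\in\mathcal{P}_{bel^Q_{x'}}$ $\epsilon$-indistinguishable. The two executions differ in minor ways. The paper first invokes a (weak) Birkhoff--von Neumann decomposition to reduce a randomized $f$ to a deterministic one, and then bounds the ratio $p_x(f^{-1}(E))/p_{x'}(f^{-1}(E))$ directly by $r^Q_W$ for the extremal pair; you instead keep $f$ as a stochastic kernel and derive the bound from the singleton (pointwise) inequality, which handles the randomized case in one pass without the decomposition. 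Both proofs rest on the same unaddressed identification, namely that the credal set (equivalently, the $bel$/$pl$ envelopes) of the composite evidential mechanism $f\circ Q$ coincides with the pushforward of the credal set of $Q$ --- the paper silently takes the maximizing pair for $r^{f\circ Q}_W$ to be of the form $p_x\circ f^{-1}$, $p_{x'}\circ f^{-1}$, while you explicitly flag this commutation as the remaining obstacle. For deterministic $f$ this is a standard fact about belief functions ($pl^{f\circ Q}_x(F)=pl^Q_x(f^{-1}(F))$ and likewise for $bel$), so your version, combined with the paper's reduction to deterministic $f$, would close the gap; as written, neither proof supplies that verification.
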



For the hypothesis testing problem, recall that $Q$ denotes an evidential privacy mechanism and $\phi: Y \rightarrow [0,1]$ is a rejection rule. In order to translate $\epsilon$-WLDP into the trade-off between type \RomanNumeralCaps{1} and \RomanNumeralCaps{2} errors, we have to divide them into two different types of errors: one is pessimistic and the other optimistic. For the rejection rule $\phi$, the \emph{pessimistic} type  \RomanNumeralCaps{1} and \RomanNumeralCaps{2} are defined as $\alpha^{pe}_{\phi} = \text{sup}_{pr\in \mathcal{P}_{bel_x^Q}}\mathbb{E}_{pr} (\phi)$ and $\beta^{pe}_{\phi} = \text{sup}_{pr\in \mathcal{P}_{bel_{x'}^Q}}\mathbb{E}_{pr} (1-\phi)$, respectively. They are actually the same as those errors in $\epsilon$-$SLDP$. Also we define the \emph{optimistic} type  \RomanNumeralCaps{1} and \RomanNumeralCaps{2} errors as $\alpha^{op}_{\phi}:= \text{inf}_{pr\in \mathcal{P}_{bel_x^Q}}\mathbb{E}_{pr} (\phi)$ and $\beta^{op}_{\phi}: = \text{inf}_{pr\in \mathcal{P}_{bel_{x'}^Q}}\mathbb{E}_{pr} (1-\phi)$, respectively. 

\begin{definition} For the above pessimistic errors, the following function is called the \emph{pessimistic trade-off function}:
			$T^{pe}(Q(x), Q(x'))(\alpha):= inf\{\beta^{pe}_{\phi}: \alpha^{pe}_{\phi}\leq \alpha\}$.
For the above optimistic errors, the following function is called the \emph{optimistic trade-off function}:
			$T^{op}(Q(x), Q(x'))(\alpha):= sup\{\beta^{op}_{\phi}: \alpha^{op}_{\phi}\leq \alpha\}$.
	\end{definition}

 The following theorem is another generalization of the well-known result (Theorem 2.4 in \cite{WassermanZ2010}) for standard differential privacy. 

\begin{theorem} \label{thm:equivalence4WLDP}For any evidential privacy mechanism $Q$, the following two statements are equivalent:
	\begin{enumerate}
		\item $Q$ is $\epsilon$-WLDP;
		\item For any $\alpha\in [0,1]$, $T^{pe}(Q(x), Q(x')) (\alpha)\geq f^{pe}_{\epsilon}(\alpha)$ and $T^{op}(Q(x), Q(x'))(\alpha)\leq f^{op}_{\epsilon}(\alpha)$ where $f^{pe}_{\epsilon}(\alpha)= max\{1-\alpha e^{\epsilon}, 0, e^{-\epsilon}(1-\alpha)\}$ and $f^{op}_{\epsilon}(\alpha)= min\{1-\alpha e^{-\epsilon}, e^{\epsilon}(1-\alpha)\}$.
	\end{enumerate}
\end{theorem}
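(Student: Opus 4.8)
The plan is to prove both implications by reducing the credal-set testing problem to the classical two-point problem, for which the stated envelopes $f^{pe}_{\epsilon}$ and $f^{op}_{\epsilon}$ are exactly the pure-DP trade-off curve of Theorem 2.4 in \cite{WassermanZ2010} and its reflection. The starting observation is the envelope identity $\min_{pr\in\mathcal{P}_{bel}}pr(E)=bel(E)$ and $\max_{pr\in\mathcal{P}_{bel}}pr(E)=pl(E)$, valid because a belief function is the lower probability of its credal set. Combined with Lemma \ref{le:charaterization_Walley}, these let me restate $\epsilon$-WLDP in the pointwise form $pl^{Q}_{x}(E)\le e^{\epsilon}bel^{Q}_{x'}(E)$ for all $x,x',E$, and --- what is more useful for the testing reformulation --- as the statement that \emph{every} pair of selections $pr_{x}\in\mathcal{P}_{bel^{Q}_{x}}$, $pr_{x'}\in\mathcal{P}_{bel^{Q}_{x'}}$ is $\epsilon$-DP in the classical sense. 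I would also record that the pessimistic errors are Choquet integrals against the plausibilities, $\alpha^{pe}_{\phi}=\int\phi\,d\,pl_{x}$ and $\beta^{pe}_{\phi}=\int(1-\phi)\,d\,pl_{x'}$, while the optimistic errors are the analogous integrals against the beliefs; since these capacities are $\infty$-monotone, each extremum is attained at a vertex of the credal set.

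For the implication $1\Rightarrow 2$ I would argue pointwise in $\phi$ and then pass to the trade-off functions. Fix a test $\phi$. Because a supremum of a sum dominates the sum of any single evaluation, $e^{\epsilon}\alpha^{pe}_{\phi}+\beta^{pe}_{\phi}\ge e^{\epsilon}\mathbb{E}_{pr_{x}}(\phi)+\mathbb{E}_{pr_{x'}}(1-\phi)$ for every selection pair, and by the restatement this pair is $\epsilon$-DP, so the classical two-point inequality forces the right-hand side to be $\ge 1$; the symmetric computation yields $\alpha^{pe}_{\phi}+e^{\epsilon}\beta^{pe}_{\phi}\ge 1$. Together with $\beta^{pe}_{\phi}\ge 0$ and $\alpha^{pe}_{\phi}\le\alpha$ these give $\beta^{pe}_{\phi}\ge f^{pe}_{\epsilon}(\alpha)$, hence $T^{pe}\ge f^{pe}_{\epsilon}$. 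For the optimistic bound I run the dual argument: the minimizing selections realize $(\alpha^{op}_{\phi},\beta^{op}_{\phi})$ as the error pair of a \emph{single} $\epsilon$-DP pair, so the reflected two-point inequalities $\beta\le 1-e^{-\epsilon}\alpha$ and $\beta\le e^{\epsilon}(1-\alpha)$ apply and yield $\beta^{op}_{\phi}\le f^{op}_{\epsilon}(\alpha^{op}_{\phi})$. The only routine technical point here is the passage from indicator tests, where each inequality reduces to a single instance of the $pl/bel$ ratio bound, to general randomized $\phi$; this is handled by the layer-cake (level-set) representation of the Choquet integral, integrating the indicator inequalities over the thresholds $t\in[0,1]$.

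For $2\Rightarrow 1$ I would use the contrapositive. If $Q$ is not $\epsilon$-WLDP then by the restatement there is an ordered pair and a set $E_{0}$ with $pl_{x}(E_{0})>e^{\epsilon}bel_{x'}(E_{0})$, equivalently a selection pair $(pr_{x}^{*},pr_{x'}^{*})$ --- the maximizer on $E_{0}$ and the minimizer on $E_{0}$ --- that fails $\epsilon$-DP. I would then feed this pair into a Neyman--Pearson (likelihood-ratio) test, which for two-point distributions is an indicator $\mathbb{I}_{E_{0}}$ or $\mathbb{I}_{E_{0}^{c}}$; by construction its error coordinates coincide with an extremal (optimistic, respectively pessimistic) error pair of $Q$, and the failure of $\epsilon$-DP places that pair strictly on the forbidden side of $f^{pe}_{\epsilon}$ or $f^{op}_{\epsilon}$, contradicting statement~2. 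The existence and indicator form of the least favorable test for the $2$-monotone capacities $pl$ and $bel$ is exactly the Huber--Strassen content \cite{HuberS73} that the section already invokes.

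The step I expect to be the main obstacle is precisely this last matching in $2\Rightarrow 1$: one must track which of the two curves a one-sided violation of the $pl/bel$ ratio actually breaks, and for which ordering of $(x,x')$ and which of $E_{0},E_{0}^{c}$ the witnessing extremal test lands outside the admissible band between $f^{pe}_{\epsilon}$ and $f^{op}_{\epsilon}$. Getting this bookkeeping right --- so that the strong direction $pl_{x}\le e^{\epsilon}bel_{x'}$ of WLDP, rather than only the automatically weaker $bel_{x'}\le e^{\epsilon}pl_{x}$, is the one recovered --- is the delicate part, and it is where the simultaneous use of \emph{both} the pessimistic and the optimistic trade-off constraints, rather than either one alone, becomes essential.
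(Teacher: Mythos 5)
Your proof of the implication $1\Rightarrow 2$ is correct and is in substance the same as the paper's: the paper chains the inequalities $e^{-\epsilon}\leq pl_x(E)/bel_{x'}(E)\leq e^{\epsilon}$ of Lemma \ref{le:charaterization_Walley} directly on a rejection region $R$ and its complement, which yields exactly the four linear bounds you obtain by passing through a selection pair $(pr_x,pr_{x'})$ and the classical two-point $\epsilon$-DP inequalities; your extra Choquet-integral step for randomized tests is not even needed once you argue through selections, since $\alpha^{pe}_{\phi}$ and $\beta^{pe}_{\phi}$ each dominate the corresponding term for any fixed selection. One small caveat: since $T^{op}$ is defined with the constraint $\alpha^{op}_{\phi}\leq\alpha$ and $f^{op}_{\epsilon}$ is decreasing, your passage from $\beta^{op}_{\phi}\leq f^{op}_{\epsilon}(\alpha^{op}_{\phi})$ to $T^{op}(\alpha)\leq f^{op}_{\epsilon}(\alpha)$ would need the constraint $\alpha^{op}_{\phi}\geq\alpha$ (otherwise $\phi\equiv 0$ forces $T^{op}(\alpha)=1$); this appears to be a defect of the paper's definition rather than of your argument, which proves the intended pointwise statement.

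The genuine gap is in $2\Rightarrow 1$, and it is not merely bookkeeping. Work out what statement 2 actually asserts for an indicator test $\phi=\mathbb{I}_R$ and the ordered pair $(x,x')$: with $\alpha^{pe}_{\phi}=pl_x(R)$, $\beta^{pe}_{\phi}=1-bel_{x'}(R)$, $\alpha^{op}_{\phi}=bel_x(R)$, $\beta^{op}_{\phi}=1-pl_{x'}(R)$, the four nontrivial constraints read $bel_{x'}(R)\leq e^{\epsilon}pl_x(R)$, $bel_x(R^c)\leq e^{\epsilon}pl_{x'}(R^c)$, $bel_x(R)\leq e^{\epsilon}pl_{x'}(R)$ and $bel_{x'}(R^c)\leq e^{\epsilon}pl_x(R^c)$. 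Every one of these is of the ``mixed'' form $bel(\cdot)\leq e^{\epsilon}pl(\cdot)$ --- the direction you yourself call automatically weaker --- and none is the defining WLDP inequality $pl_x(E)\leq e^{\epsilon}bel_{x'}(E)$. Consequently a violation $pl_x(E_0)>e^{\epsilon}bel_{x'}(E_0)$ pushes the pessimistic error pair of $\mathbb{I}_{E_0}$ \emph{above} the upper envelope and the optimistic pair of $\mathbb{I}_{E_0^c}$ \emph{below} the lower one, i.e.\ into regions that statement 2 does not forbid, so the Neyman--Pearson witness you propose never lands outside the admissible band, for either ordering of $(x,x')$ or either of $E_0,E_0^c$; the layer-cake decomposition shows randomized tests add nothing beyond the indicator case. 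The obstruction is intrinsic: the mechanism both of whose rows are the vacuous mass function $m(Y)=1$ satisfies all of the mixed inequalities (hence statement 2 in its intended reading) for every $\epsilon>0$, yet $r_Q^W=\infty$, so it is not $\epsilon$-WLDP for any $\epsilon$. So your plan for the converse cannot be completed from the stated trade-off functions; note that the paper's own supplementary proof establishes only $1\Rightarrow 2$ and dismisses the converse with ``can be shown in a similar way,'' so you have in fact located a real problem with the theorem rather than merely with your argument.
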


For the composition, the adversary needs to distinguish between $Q(x)\times Q(x)$ and $Q(x')\times Q(x')$. Similarly, we can define pessimistic and optimistic type \RomanNumeralCaps{1} and \RomanNumeralCaps{2} errors: $\alpha^{2,pe}_{\phi},\beta^{2,pe}_{\phi}, \alpha^{2,op}_{\phi}$ and $\beta^{2,op}_{\phi}$. Moreover, for the hypothesis testing problem for the composition, we define the pessimistic and optimistic trade-off functions similarly: $T^{pe}_2(Q(x)\times Q(x), Q(x')\times Q(x'))(\alpha):= inf\{\beta^{2,pe}_{\phi}: \alpha^{2,pe}_{\phi}\leq \alpha\}$, and 
	 $T^{op}_2(Q(x)\times Q(x), Q(x')\times Q(x'))(\alpha):= sup\{\beta^{2,op}_{\phi}: \alpha^{2,op}_{\phi}\leq \alpha\}$.
\

\begin{corollary}\label{cor:composition4WLDP} For any $\alpha\in [0,1]$, $T_2^{pe}(Q(x)\times Q(x), Q(x')\times Q(x')) (\alpha)\geq f^{2,pe}_{\epsilon}(\alpha)$ and $T_2^{op}(Q(x)\times Q(x), Q(x')\times Q(x'))(\alpha)\leq f^{2,op}_{\epsilon}(\alpha)$ where $f^{2,pe}_{\epsilon}(\alpha)= max\{1-\alpha e^{2\epsilon}, -\alpha + \frac{2}{e^{\epsilon}+1}, e^{-2\epsilon}(1-\alpha)\}$ and $f^{2,op}_{\epsilon}(\alpha)= min\{1-\alpha e^{-2\epsilon}, e^{2\epsilon}(1-\alpha), -\alpha + \frac{3-e^{-2\epsilon}}{e^{\epsilon}+1}\}$.
\end{corollary}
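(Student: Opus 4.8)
The plan is to analyse the two–fold product experiment directly rather than to treat $Q\times Q$ as a generic $2\epsilon$–WLDP mechanism. First, by the Composition Lemma \ref{le:composition} the product $Q\times Q$ is $2\epsilon$–WLDP, so feeding it into the forward direction of Theorem \ref{thm:equivalence4WLDP} with budget $2\epsilon$ immediately yields the two outer pieces $1-\alpha e^{2\epsilon}$, $e^{-2\epsilon}(1-\alpha)$ of $f^{2,pe}_\epsilon$ and, dually, the outer pieces of $f^{2,op}_\epsilon$. The genuinely new content is the \emph{middle} pieces $-\alpha+\frac{2}{e^\epsilon+1}$ and $-\alpha+\frac{3-e^{-2\epsilon}}{e^\epsilon+1}$: on a middle interval of $\alpha$ these strictly dominate (resp.\ are strictly dominated by) the outer pieces, so they cannot come from the coarse $2\epsilon$ bound and must be extracted from the product structure itself. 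Thus the whole difficulty is to sharpen the middle segment, of slope $-1$, coming from the ``equal–likelihood'' outcomes of the product.

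For the pessimistic lower bound I would exploit the fact, noted after Definition \ref{def:WLDP}, that the pessimistic type~I and type~II errors of $\epsilon$–WLDP coincide verbatim with the errors of $\epsilon$–SLDP; consequently $T^{pe}_2\ge f^{2,pe}_\epsilon$ is exactly the type~II lower envelope $u^2$ already proved in Corollary \ref{cor:composition4SLDP}, and can simply be invoked. Concretely this amounts to the following reduction: among all $\epsilon$–WLDP mechanisms the pessimistic trade–off is \emph{minimised} by Warner's Bayesian matrix $Q_{WRR}$ of Example \ref{ex:theme}, for which every consistent–probability set is a singleton. The product $Q_{WRR}\times Q_{WRR}$ has four outcomes whose likelihood ratios between $x$ and $x'$ take only the three values $e^{2\epsilon},1,e^{-2\epsilon}$ (the value $1$ carrying the doubled mass $2e^\epsilon/(e^\epsilon+1)^2$); the Neyman–Pearson test rejects outcomes in order of likelihood ratio, giving a convex, piecewise–linear boundary with segment slopes $-e^{2\epsilon},-1,-e^{-2\epsilon}$ and middle segment lying on the line $\beta=-\alpha+\frac{2}{e^\epsilon+1}$. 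Reading this boundary as the upper envelope of its three lines returns $f^{2,pe}_\epsilon$ exactly.

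For the optimistic upper bound $T^{op}_2\le f^{2,op}_\epsilon$ I would run the dual argument, but here the extremal mechanism is \emph{not} Bayesian: since any single experiment has a convex trade–off while $f^{2,op}_\epsilon$ is concave, the maximum of $T^{op}_2$ must be attained by a genuinely imprecise belief function, one that saturates the ratio bound $e^{-\epsilon}\le pl_x(E)/bel_{x'}(E)\le e^{\epsilon}$ of Lemma \ref{le:charaterization_Walley} on the relevant focal sets. Writing the optimistic errors as lower/upper Choquet expectations, $\alpha^{op}_\phi=\underline{\mathbb E}_x(\phi)$ and $\beta^{op}_\phi=1-\overline{\mathbb E}_{x'}(\phi)$, the pointwise bounds of Lemma \ref{le:charaterization_Walley} integrate (by monotonicity of the Choquet integral) to the two outer inequalities, and the extremal imprecise product supplies the third, concave segment $-\alpha+\frac{3-e^{-2\epsilon}}{e^\epsilon+1}$; the lower envelope of the three lines is then $f^{2,op}_\epsilon$.

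The step I expect to be the main obstacle is exactly this worst–case reduction, and it is harder in the optimistic direction. One must justify that passing from an arbitrary $\epsilon$–WLDP $Q$ to the canonical extremal pair can only decrease the pessimistic trade–off and only increase the optimistic one — a monotonicity/convexity (tensor–product) statement in the space of trade–off functions — and one must correctly handle the consistent–probability polytope $\mathcal P_{bel_x\times bel_x}$ of the \emph{product} belief function, over which the optimistic infima are taken. Verifying that these infima are attained at the claimed imprecise configuration, and that it produces precisely the kink constant $\frac{3-e^{-2\epsilon}}{e^\epsilon+1}$, is where the real work lies; the remaining piecewise–linear bookkeeping is routine.
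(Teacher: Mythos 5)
Your plan correctly identifies the shape of the answer: the two outer pieces of $f^{2,pe}_{\epsilon}$ and $f^{2,op}_{\epsilon}$ do follow from Lemma \ref{le:composition} plus Theorem \ref{thm:equivalence4WLDP} applied with budget $2\epsilon$, and the middle slope~$-1$ segments must come from the product structure. Your computation of the middle segment for the Warner product is also right (the three likelihood-ratio classes $e^{2\epsilon},1,e^{-2\epsilon}$, with the ratio-one class carrying mass $2e^{\epsilon}/(e^{\epsilon}+1)^2$, give the line $\alpha+\beta=\frac{2}{e^{\epsilon}+1}$). However, the proof has a genuine gap exactly where you flag it: the entire argument for the middle pieces rests on the claim that an arbitrary $\epsilon$-WLDP mechanism's pessimistic (resp.\ optimistic) two-fold trade-off dominates (resp.\ is dominated by) that of a canonical extremal mechanism, and this reduction is asserted but never established. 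It is not a routine step: for the pessimistic bound one must show a tensorization/monotonicity property of trade-off functions over credal sets, and for the optimistic bound one must additionally control the credal set $\mathcal{P}_{bel_x\otimes bel_x}$ of the product belief function, which in general strictly contains the (convex hull of the) products of consistent probabilities, so the infima defining $\alpha^{op}_{\phi},\beta^{op}_{\phi}$ are taken over a larger polytope than your extremal-configuration analysis accounts for. As written, the proposal proves the bound only for the two exhibited mechanisms, not for all $\epsilon$-WLDP $Q$.

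For comparison, the paper does not attempt this reduction at all. Its proof of Corollary \ref{cor:composition4WLDP} simply transports the proof of Corollary \ref{cor:composition4SLDP} with the directions of the inequalities reversed, and that proof is a direct verification on the explicit tensor-square mass functions $m^{\otimes 2}_x$, $m^{\otimes 2}_{x'}$ of the concrete $2\times 3$ mechanism, using only the pointwise ratio constraints $e^{-\epsilon}\leq \frac{1-q}{q},\frac{1-p}{p}\leq e^{\epsilon}$ to derive the three-piece envelope (the term $-\alpha+\frac{2}{e^{\epsilon}+1}$ arises there from the inequality $p\geq\frac{2-e^{\epsilon}}{e^{\epsilon}+1}$ rather than from a Neyman--Pearson ordering). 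So the paper's route avoids your hard step by working with belief/plausibility bounds on the product directly; if you want to keep your more conceptual extremal-mechanism route, you must actually supply the worst-case reduction and the treatment of the product credal set, which is precisely the part left open.
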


Both Theorem \ref{thm:equivalence4WLDP} and Corollary \ref{cor:composition4WLDP} can be visualized as follows:

\begin{figure}[htbp]
\centering
\includegraphics[height=5.0cm,width=6cm]{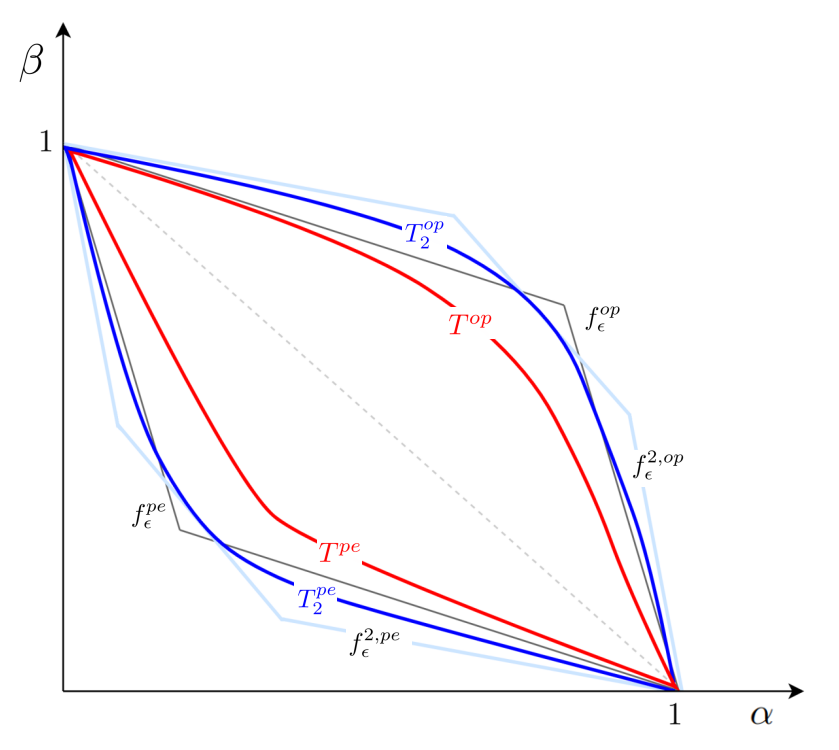}
\caption{The trade-off between  type \RomanNumeralCaps{1} and \RomanNumeralCaps{2} errors for WLDP}
\label{fig:wldp}
\end{figure}

For simplicity, we consider the above  evidential privacy matrix
\begin{equation*}
Q_{2\times 3}= \left(
\begin{array}{ccc}
p & q & 1-p-q \\
q & p & 1-p-q
\end{array} \right).
\end{equation*}
  In Definition \ref{def:SLDP}, $1-p-q$ quantifies the conditional probability of the third response ``I don't know".  Similarly, in Definition \ref{def:WLDP}, $p$ and $q$ are the probabilities of telling truthfully and of lying respectively. However, $1-p-q$  measures the probability of \emph{unknown} response strategy or \emph{possible} noncompliance. Unlike SLDP, there are only two responses ``Yes" and ``No" for response mechanism according to WLDP and ``I don't know" is not an option.  In order to obtain a Warner-style randomized response $2\times 2$ matrix, we redistribute the mass $1-p-q$ on the unknown part to those masses on ``Yes" and ``No"  and get the following matrix:

\begin{equation*}
Q_{\lambda}= \left(
\begin{array}{ccc}
p +  \lambda(1-p-q) & q + (1-\lambda)(1-p-q)  \\
q + (1-\lambda)(1-p-q) & p +\lambda(1-p-q)
\end{array} \right)
\end{equation*}
 When $\lambda=1$, the associated privacy loss is the largest and is the same as according to Definition \ref{def:WLDP}. The respondent is most conservative and make the worst-case analysis.   On the other hand, when $\lambda=0$, the associated privacy loss is the smallest. In this case, the respondent is the most optimistic and assumes the best possibility.  Similarly,  we can obtain the maximum likelihood estimation
$\hat{\pi} = \frac{\frac{n_1}{n}-(1-\lambda)(1-p-q)-q}{p-q+(2\lambda-)(1-p-q)}$, and show that
$\hat{\pi}$ is an unbiased estimate of $\pi$.  From Theorem \ref{th:variance}, we know that, when $\lambda =0$, the variance $Var(\hat{\pi}) (= \frac{-(\pi-1/2)^2 + \frac{1}{4(2p-1)^2}}{n})$ is the largest and is defined as \emph{the estimation accuracy} of the privacy matrix $Q_{2\times 3}$ according to Walley.

According to Shafer's semantics, the privacy loss for the mechanism $Q_{2\times 3}$ is defined as  $\epsilon^S(p,q)= ln(\frac{p}{q})$ and its  accuracy is $\nu^S(p,q)=Var(\hat{\pi} |N_1+ N_2 \neq 0) = \frac{-(\pi-\frac{1}{2})^2 + \frac{(p-q)^2}{4(p+q)^2}}{(n+1)(p+q)-1}$ (Thm. (\ref{th:variance})).  In contrast, according to Walley's semantics, the privacy loss for $Q_{2\times 3}$ is defined as $ln(\frac{1-q}{q})$, which is denoted as $\epsilon^W(p,q)$ and is equal to the privacy loss of the associated matrix $Q_1$ in Warner's model.   Moreover its accuracy is $\frac{-(\pi-\frac{1}{2})^2 + \frac{1}{4(2p-1)^2}}{n}$, which is denoted as $\nu^W(p, q)$ and is exactly the accuracy for the matrix $Q_0$ in Warner's model.  In other words, both $\epsilon^W(p, q)$ and $\nu^W(p,q)$ are obtained according to the worst-case analysis from the perspectives of the respondent and adversary respectively.  Similarly, we may obtain $\epsilon^O(p, q)$ and $\nu^O(p, q)$, the optimal privacy loss and estimation error among all possible privacy mechanisms $Q_{\lambda}$.  The following Figure \ref{fig:tradeoffs} illustrates the relationships among the three trade-offs between privacy and accuracy: $(\epsilon^S(p, q), \nu^S(p, q)), (\epsilon^W(p, q), \nu^W(p, q))$ and $(\epsilon^O(p, q), \nu^O(p, q))$.  The rectangle shown in the figure consists of exactly the trade-offs between privacy and accuracy for all possible $Q_{\lambda}$ with $(\epsilon^W(p,q),\nu^W(p,q))$ as the worst and $(\epsilon^O(p,q),\nu^O(p,q))$ as the best.


\begin{figure}[htbp]
\centering
\includegraphics[height=4.0cm,width=6cm]{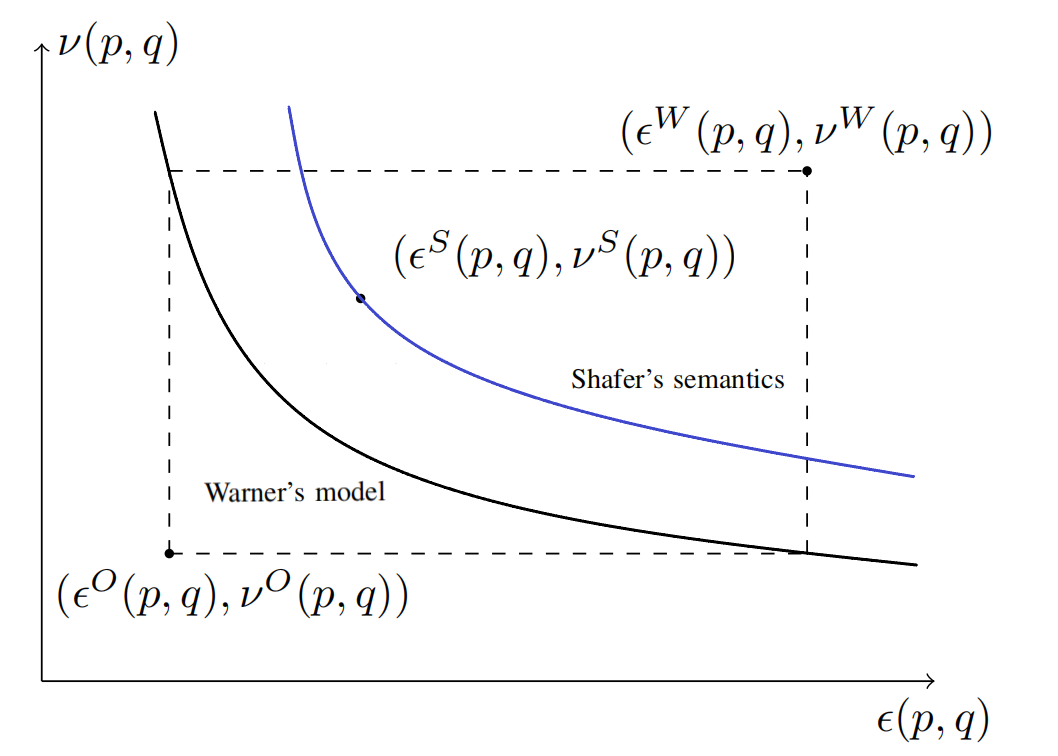}
\caption{Comparison of trade-offs in the two semantics}
\label{fig:tradeoffs}
\end{figure}

\begin{corollary} $\epsilon^W(p,q)$ is decreasing with respect to $q$ and $\nu^W(p,q)$ is decreasing with respect to $p$.
\end{corollary}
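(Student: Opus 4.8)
The plan is to treat each of the two claims as a one-variable monotonicity statement and verify it by a direct derivative computation, exploiting the fact that, by the formulas already established in the text, $\epsilon^W(p,q)=\ln\frac{1-q}{q}$ depends only on $q$ while $\nu^W(p,q)=\frac{1}{n}\bigl[-(\pi-\frac12)^2+\frac{1}{4(2p-1)^2}\bigr]$ depends only on $p$. So I would simply differentiate each expression with respect to the relevant variable and check that the resulting derivative is strictly negative throughout the admissible parameter range.

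For the first claim I would write $\epsilon^W(p,q)=\ln(1-q)-\ln q$ and compute $\partial_q\epsilon^W=-\frac{1}{1-q}-\frac{1}{q}$. Under the standing assumption $p>\frac12$ together with the constraint $p+q\le 1$ (which is forced by $1-p-q\ge 0$), we have $q\in(0,1-p)$ and in particular $q<\frac12$, so both terms are strictly negative and $\epsilon^W$ is strictly decreasing in $q$. Equivalently one may observe directly that $\frac{1-q}{q}=\frac1q-1$ is decreasing in $q$ and that $\ln$ is increasing, so the composite is decreasing; I would keep whichever phrasing reads more cleanly. For the second claim the only $p$-dependent term of $\nu^W$ is $\frac{1}{4(2p-1)^2}$, so I would compute $\partial_p\nu^W=\frac{1}{n}\frac{d}{dp}\frac{1}{4(2p-1)^2}=-\frac{1}{n(2p-1)^3}$. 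The assumption $p>\frac12$ gives $2p-1>0$, hence $(2p-1)^3>0$ and the derivative is strictly negative, so $\nu^W$ is strictly decreasing in $p$.

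These computations are entirely routine and there is no genuine analytic obstacle here. The only point requiring a moment of care is the bookkeeping about which parameter each quantity truly depends on—noting that $\epsilon^W$ and $\nu^W$ are the worst-case ($\lambda=1$ and $\lambda=0$) quantities and therefore decouple into functions of $q$ and of $p$ respectively—and the verification that the domain constraints ($p>\frac12$, and the resulting range $q\in(0,1-p)$) guarantee the strict sign of each derivative. Once those are pinned down, both monotonicity statements follow at once.
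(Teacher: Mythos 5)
Your proposal is correct and follows essentially the same route as the paper: both identify $\epsilon^W(p,q)=\ln\frac{1-q}{q}$ (the privacy loss of the worst-case matrix $Q_1$) and $\nu^W(p,q)=\frac{-(\pi-\frac12)^2+\frac{1}{4(2p-1)^2}}{n}$ (the accuracy of $Q_0$), after which monotonicity in $q$ and $p$ respectively is immediate. The paper simply asserts the monotonicity at that point, whereas you verify it by explicit differentiation under $p>\frac12$; this is only a difference in the level of detail, not of method.
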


According to the corollary, we may compare two privacy mechanisms $Q_{2\times 3}(p, q)$ and $Q_{2\times 3}(p', q')$. If $p\geq p'$ and $q\geq q'$, then $\epsilon^W(p, q) \leq \epsilon^W(p', q')$ and $\nu^W(p, q) \leq \nu^W(p', q')$. In this case, $Q_{2\times 3}(p, q)$ is \emph{preferred} to $Q_{2\times 3}(p', q')$.  So the trade-off in Walley's semantics is similar to the minimax estimation for LDP \cite{DuchiJW18}.

\section{Conclusion}
\noindent To the best of our knowledge, we are the \emph{first} to explore differential privacy from a different uncertainty perspective than probability theory.  The fact that differential privacy is closely related to statistical  analysis \cite{DworkR14} may explain why there are few research about DP in other uncertainty theories which don't support a practical statistical analysis. But belief functions are deeply rooted in fiducial inference,  an important school in statistics \cite{Dempster67,Shafer82, MartinL15,Martin2019false}. It is desirable to develop a \emph{belief-function} theory of differential privacy. The LDP implicitly requires some assumptions about the adversary's view of belief functions in privacy mechanism. There are many semantics for belief functions. In this paper, we choose Shafer's semantics as randomly encoded messages \cite{ShaferT85} and Walley's  interpretation as imprecise-probabilities \cite{Walley90}. Our work in LDP is motivated by the nonresponse and noncompliance issue in randomized response technique in \cite{Warner65,GraemeIZ15} and discrete distribution estimation problem in \cite{KairouzOV16,KairouzBR16,WangBLJ17,HuangD2008} where the size of the input alphabet is no less than that of the output alphabet. However, since the number of messages (or the size of the powerset of the output set) is usually larger than that of the input set in our LDP mechanisms, MLE is usually different from empirical estimation in this case and their techniques don't apply here. Moreover, there is a rich literature to address nonresponse in survey research \cite{LittleR2002} but most of them regard the issue   as a missing-data problem and few of them consider the privacy problem. There seems no obvious LDP definitions for coarsening at random because the outputs of coarsening mechanisms at different inputs  are different and hence the adversary can easily distinguish these two inputs. It may be interesting to explore the LDPs for contamination models. There are  2 other possible definitions of SLDP in terms of belief functions and plausibility functions: $e^{-\epsilon} \leq \frac{bel_x^Q(E)}{bel_{x'}^Q(E)}\leq e^{\epsilon}$ and $e^{-\epsilon} \leq \frac{pl_x^Q(E)}{pl_{x'}^Q(E)}\leq e^{\epsilon}$. Lemma \ref{le:alternation1} and the remarks afterwards actually show their relationships. In future versions, we will elaborate these two different definitions and their relations with Definition \ref{def:SLDP}.

In this paper we show a  binary composition theorem for each definition (Corollaries \ref{cor:composition4SLDP} and \ref{cor:composition4WLDP}). We believe that, for our two definitions SLDP and WLDP, the composition of the hypothesis-testing trade-off functions \cite{Kairouz2017composition,Balle2020hypothesis}  converges to some (most probably random-set variant) form of Gaussian DP  \cite{DongRS21} according to some central limit theorem (Chapter 3 in \cite{Molchanov17}). In this paper, we took the first step in this direction and showed the effect of the composition of hypothesis-testing trade-off functions(Corollaries 1 and 4).  Moreover, we would like to investigate LDP for belief functions from the perspective of respondents (as in \cite{Xiong0LJ20}) and  conduct a series of rigorous surveys to show that our new generalized Warner's mechanism including ``don't know" as an option can indeed increase user's willingness to participate.

\section*{Acknowledgements}
The corresponding author wants to thank Professors Arthur Dempster, Xiao-Li Meng and Ruobin Gong for their support during his visiting scholarship at Harvard Statistics Department when this research was initiated. The definition of LDP according to Walley was inspired by an insightful discussion with Professor Xiao-Li Meng. The research is partly supported by NSFC (61732006) and the third author is supported by NSFC (No.61772534).

 \bibliographystyle{plain}
   \bibliography{Savage}
\newpage
\begin{large}
   \textbf{Supplementary-Materials}
\end{large}

\vspace*{3 ex}

\textbf{Proof of Lemma 3.2}:  These two propositions follow directly from the inequality (1) and the facts that $bel_x(E) = \sum_{E'\subseteq E} m(E')$ and $pl_x(E) = \sum_{E'\cap E \neq \emptyset} m(E')$.

\textbf{Proof of Lemma 3.3}: The composition follows from the fact that $\frac{m^{Q_{12}}_x(E_1, E_2)}{m^{Q_{12}}_{x'}(E_1, E_2)}=\frac{m^{Q_{1}}_x(E_1)}{m^{Q_{1}}_{x'}(E_1)}\frac{m^{Q_{2}}_{x}(E_2)}{m^{Q_{2}}_{x'}(E_2)}$.

\textbf{Proof of Lemma 3.4}: The proof for $\epsilon$-SLDP algorithm is similar to that of the post-processing property for probabilistic privacy mechanisms (Proposition 2.1 in \cite{DworkR14}). We prove the proposition only for a deterministic function $f: Y\rightarrow Z$. For any $x, x'\in X$, $m_x^{f\circ Q} (E) = m_x^Q (f^{-1}(E)) \leq e^{\epsilon} m_{x'}^Q (f^{-1}(E))$.

\textbf{Proof of Theorem 3.5}:  Let's assume that the rejection region for the hypothesis testing problem  is $R$. Let  $Q$ be $\epsilon$-SLDP. 
	\begin{enumerate}
		\item  Assume that $pl_x^{Q}(R)\leq \alpha$. It follows that, for any $E\subseteq Y$, $-\epsilon \leq \frac{m_x^Q}{m^Q_{x'}}\leq e^{\epsilon}$. We have that $e^{\epsilon} bel_{x'}^Q(R^c) \geq bel_x^Q(R^c) \geq 1-\alpha$. So $bel_{x'}^Q(R^c) \geq e^{-\epsilon} (1-\alpha)$. So, $pl_{x'}(R) \leq 1-e^{-\epsilon}(1-\alpha)$.  Moreover, $pl_{x'}^Q(R) \leq e^{\epsilon}pl_x^{Q}(R) = e^{\epsilon} \alpha $.  By putting these together, we obtain that $pl_{x'}^Q(R) \leq min\{e^{\epsilon} \alpha,1-e^{-\epsilon}(1-\alpha)\}$. This implies that the type 2 error $\beta_{\phi} \geq max\{1-e^{\epsilon} \alpha,e^{-\epsilon}(1-\alpha)\}$.  
	\item If we asssume that, $bel_{x}^Q(R)\leq \alpha'$, then $pl_{x}^Q(R^c)\leq 1-\alpha'$. Since $e^{-\epsilon}pl_{x'}^Q(R^c)\leq pl_x^Q(R^c)$, $pl_{x'}^Q(R^c)\leq e^{\epsilon}(1-\alpha')$. It follows that $bel_{x'}^Q(R)\geq 1-e^{\epsilon} (1-\alpha')$.  Moreover, $bel_{x'}^Q(R) \geq e^{-\epsilon}bel_x^Q(R) \geq \alpha' e^{-\epsilon}$.  By puting these together, we have that $bel_{x'}^Q(R) \geq max\{\alpha' e^{-\epsilon}, 1-e^{\epsilon}(1-\alpha')\}$.
	\end{enumerate}

In other words, if type  \RomanNumeralCaps{1} error $\alpha_{\phi}\in [l, L]$, then type \RomanNumeralCaps{2} error 
$\beta_{\phi}\in [u(L), U(l)]$ where $u(\alpha) = max\{e^{-\epsilon}(1-\alpha), 1-\alpha e^{\epsilon}\}$ and $U(\alpha) = min\{e^{\epsilon}(1-\alpha),1-\alpha e^{-\epsilon}\}$.  Since the reasonings above are birectional, the direction from (2) to (1) can be shown similarly.

\textbf{Proof of Corollary 3.6}: 
Let  $Q_{2\times 3}$ denote a known row-stochastic matrix as follows:

\begin{equation*}
Q_{2\times 3} = \left(
\begin{array}{ccc}
p & q & 1-p-q \\
q & p & 1-p-q
\end{array} \right)
\end{equation*}
where $p, q\in [0,1]$. $Q_{2\times 3}$ may be regarded as a generalized Warner's randomized response mechanism where a respondent answers truthfully with probability $p$, tells a lie with $q$ and don't respond or respond ``I don't know" with probability $1-p-q$.  Let $Q_x$ denotes the first row of $Q$ and $Q_{x'}$ the second row.  Now we consider the hypothesis testing problem by distinguishing between $Q_x\times Q_x$ and $Q_{x'}\times Q_{x'}$. Let $m^{\otimes 2}_x$ and $m^{\otimes 2}_{x'}$ denote their corresponding mass functions.  Their output set is $\{0,1,2\}$.

\begin{center}
\begin{tabular}{  c | c | c | c | c | c  | c }
$m^{\otimes 2}_x$ & $\{0\}$ & $\{1\}$ & $\{2\}$ & $\{0,1\}$ & $\{1,2\}$ & $\{0,1,2\}$\\
\hline 
& $p^2$ & $2pq$ & $q^2$ & $2p(1-p-q)$ & $2q(1-p-q)$ &   $(1-p-q)^2 $
\end{tabular}
\end{center}

And 
\begin{center}
\begin{tabular}{  c | c | c | c | c | c  | c }
$m^{\otimes 2}_{x'}$ & $\{0\}$ & $\{1\}$ & $\{2\}$ & $\{0,1\}$ & $\{1,2\}$ & $\{0,1,2\}$\\
\hline 
& $q^2$ & $2pq$ & $p^2$ & $2q(1-p-q)$ & $2p(1-p-q)$ &   $(1-p-q)^2 $
\end{tabular}
\end{center}

Since $Q$ is $\epsilon$-SLDP, we have 
\begin{center}
	$e^{-\epsilon} \leq \frac{1-q}{q}, \frac{1-p}{p} \leq e^{\epsilon}$.   
\end{center}

 Let $R$ be a rejection region and assume that $\alpha' \leq pl_{x}^{Q\times Q} (R) \leq \alpha$. If $(1-p)^2 \leq \alpha$, then $p^2\leq e^{2\epsilon}\alpha$.  Since $e^{\epsilon}\geq 1$, $p \geq \frac{2-e^{\epsilon}}{e^{\epsilon}+1}$.  It follows that $(2-p)p \leq e^{2\epsilon}(1-p^2)$ and $1-(1-p)^2 \leq e^{2\epsilon}(1-p^2)$. So we have that $p^2 \leq 1-e^{-2\epsilon}[1-(1-p)^2]$. Since $1-e^{-2\epsilon}[1-(1-p)^2] \leq 1 - (1-x) e^{-2\epsilon}$.  so we havs shown that if the type 1 error is $\leq \alpha$, then the type 2 error is at least $u^2(\alpha): = max\{e^{-2\epsilon}(1-\alpha), -\alpha+\frac{2}{e^{\epsilon}+1}, 1-\alpha e^{2\epsilon}\}$.  For the other part, we can show similarly: if the type 1 error is at most $\alpha$, then the type 2 error is at least $U^2(\alpha): = min\{e^{2\epsilon}(1-\alpha),1-\alpha e^{-2\epsilon}, -\alpha +\frac{3-e^{-2\epsilon}}{e^{\epsilon}+1}\}$. 


\begin{lemma}
	$\mathbb{E}[\frac{N_1}{N_1+N_2}] = \frac{q_1}{q_1+q_2} (1-q_3^n)$, and
	 $\mathbb{E}[\frac{N_2}{N_1+N_2}] = \frac{q_2}{q_1+q_2} (1-q_3^n)$.
\end{lemma}

\textbf{Proof of Lemma 0.1}:  We only prove the first part and the proof of the second is similar. 
	\begin{align}
	\mathbb{E}[\frac{N_1}{N_1+N_2}] &=  \sum_{n_1+n_2+n_3 = n,n_1+n_2 \neq 0}\frac{n_1}{n_1+ n_2} \binom{n}{n_3} \binom{n_1+n_2}{n_1}q_1^{n_1}q_2^{n_2}q_3^{n_3} \nonumber\\
	& = \sum_{n_1+n_2+n_3 = n,n_1 \neq 0}\frac{n_1}{n_1+ n_2} \binom{n}{n_3} \binom{n_1+n_2}{n_1}q_1^{n_1}q_2^{n_2}q_3^{n_3} \nonumber\\
	& = q_1\sum_{n_1+n_2+n_3 = n,n_1 \neq 0} \binom{n}{n_3} \binom{n_1+n_2-1}{n_1-1}q_1^{n_1-1}q_2^{n_2}q_3^{n_3} \nonumber\\
	& = q_1 \sum_{1\leq n_3 \leq n-1} \binom{n}{n_3}q_3^{n_3} [\sum_{n_1+n_2=n-n_3,n_1\neq 0}\binom{n_1+n_2-1}{n_1-1}q_1^{n_1-1}q_2^{n_2}]  \nonumber \\
	& = q_1 \sum_{1\leq n_3 \leq n-1} \binom{n}{n_3}q_3^{n_3} (q_1+q_2)^{n_1+n_2-1}\nonumber\\
	& = \frac{q_1}{q_1+ q_2}[(q_1+q_2+q_3)^n- q_3^n]\nonumber \\
	& = \frac{q_1}{q_1+ q_2}(1- q_3^n)\nonumber
		\end{align}

\textbf{Proof of Theorem 3.9}: The equality $\mathbb{E}[\frac{N_2 q- N_1p}{(N_1+N_2)(q-p)}] = \pi (1-q_3^n)$ follows directly from the above lemma and the formula for MLE.

\textbf{The proof of Theorem 3.10} is much more involved. It needs the following five necessary lemmas which are not stated in the main text. 
Recall that $n$ is the sample size.  Let  $A= \sum_{0\leq N_3 <n}\frac{1}{n-N_3}$ ${n}\choose{N_3}$$ (1-q_3)^{n-N_3}q_3^{N_3}$ and  $f(q_1, n_3)=\sum_{n_1+n_2=n-n_3,n_1\neq 0}n_1\binom{n_1+n_2-1}{n_1-1}q_1^{n_1-1}q_2^{n_2}$.

\begin{lemma} 
\begin{equation*}
f(q_1,n_3) = \left\{
\begin{array}{rl}
1 & \text{ if } n_3= n-1,\\
(q_1+q_2)^{n-n_3-1}+(n-n_3-1)q_1(q_1+q_2)^{n-n_3-2} & \text{ if } n_3 \leq n-2.
\end{array} \right.
\end{equation*}

 \end{lemma}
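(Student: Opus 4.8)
The plan is to recognize $f(q_1,n_3)$ as a weighted binomial sum that can be evaluated by a single reindexing together with two applications of the binomial theorem. Writing $M := n-n_3$ and substituting $j := n_1-1$, the constraint $n_1+n_2 = M$ with $n_1 \neq 0$ becomes $0 \leq j \leq M-1$ and $n_2 = M-1-j$, so that
$$
f(q_1,n_3) = \sum_{j=0}^{M-1} (j+1)\binom{M-1}{j} q_1^{j} q_2^{M-1-j}.
$$
The key move is then to split the coefficient as $j+1 = j + 1$ and evaluate the two resulting sums separately.

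First I would isolate the boundary case $n_3 = n-1$, i.e. $M=1$: here the sum collapses to the single term $j=0$, giving $f = \binom{0}{0}q_1^{0}q_2^{0} = 1$, which is the first case of the statement. This case must be handled on its own because the general closed form derived below carries a factor $(q_1+q_2)^{M-2}$, which is only meaningful once $M \geq 2$.

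For the main range $n_3 \leq n-2$, i.e. $M \geq 2$, the ``$+1$'' part is immediately
$$
\sum_{j=0}^{M-1}\binom{M-1}{j} q_1^{j} q_2^{M-1-j} = (q_1+q_2)^{M-1}
$$
by the binomial theorem. For the ``$j$'' part I would apply the absorption identity $j\binom{M-1}{j} = (M-1)\binom{M-2}{j-1}$, which annihilates the $j=0$ term, then pull out one factor of $q_1$ and reindex by $k := j-1$:
$$
\sum_{j=0}^{M-1} j\binom{M-1}{j} q_1^{j} q_2^{M-1-j} = (M-1)q_1\sum_{k=0}^{M-2}\binom{M-2}{k}q_1^{k}q_2^{M-2-k} = (M-1)q_1(q_1+q_2)^{M-2}.
$$
Adding the two contributions and substituting back $M = n-n_3$ gives $f(q_1,n_3) = (q_1+q_2)^{n-n_3-1} + (n-n_3-1)q_1(q_1+q_2)^{n-n_3-2}$, which is precisely the second case.

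There is no genuine obstacle beyond careful index bookkeeping; the only point that truly needs attention is the separation of the $M=1$ case, since the general formula's $(q_1+q_2)^{M-2}$ term would otherwise carry a negative exponent. An alternative route, which I would keep in reserve as a cross-check, is to observe that $f(q_1,n_3) = \frac{\partial}{\partial q_1}\bigl[q_1(q_1+q_2)^{M-1}\bigr]$ with $q_2$ held fixed, whose differentiation reproduces $(q_1+q_2)^{M-1} + (M-1)q_1(q_1+q_2)^{M-2}$ directly, confirming the same closed form.
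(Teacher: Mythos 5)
Your proof is correct, but your primary argument takes a genuinely different route from the paper's. The paper evaluates $f$ by an integrate-then-differentiate trick: it computes $\int_0^x f(t,n_3)\,dt = \sum_{n_1+n_2=n-n_3,\,n_1\neq 0}\binom{n_1+n_2-1}{n_1-1}x^{n_1}q_2^{n_2} = x(x+q_2)^{n-n_3-1}$ via the binomial theorem and then differentiates in $x$ to recover the closed form --- which is exactly the alternative you kept in reserve as a cross-check, since this is the same as writing $f(q_1,n_3)=\frac{\partial}{\partial q_1}\bigl[q_1(q_1+q_2)^{M-1}\bigr]$. Your main argument instead evaluates the sum directly: reindex by $j=n_1-1$, split the weight $j+1$ into $j$ plus $1$, and apply the absorption identity $j\binom{M-1}{j}=(M-1)\binom{M-2}{j-1}$ together with two instances of the binomial theorem. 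The two computations are essentially dual to one another, but yours is more elementary in that it avoids calculus entirely and stays within finite combinatorial identities, and it treats the boundary case $M=1$ more carefully than the paper does: the paper's final differentiated expression tacitly needs $n_3\le n-2$ for the exponent $n-n_3-2$ to make sense, whereas you explain precisely why that case must be isolated. Both arguments are sound and yield the stated closed form.
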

\begin{proof} First we compute the integral of $f(x, n_3)$.
	\begin{align}
	\int_{0}^{x} f(t, n_3) dt &= \sum_{n_1+n_2=n-n_3,n_1\neq 0}\int_{0}^{x}n_1\binom{n_1+n_2-1}{n_1-1}t^{n_1-1}q_2^{n_2}dt\nonumber \\
	& = \sum_{n_1+n_2=n-n_3,n_1\neq 0}\binom{n_1+n_2-1}{n_1-1}x^{n_1}q_2^{n_2}\nonumber\\
	& = x(x+q_2)^{n-n_3-1}\nonumber
	\end{align}

So $f(x,n_3) = (\int_0^x f(t)dt)'= (x+q_2)^{n-n_3-1} + (n-n_3-1)x(x+q_2)^{n-n_3-2}$. 
\end{proof}

\begin{lemma} $\sum_{0\leq n_3 \leq n-1} \frac{1}{n-n_3}(q_1+q_2)^{n-n_3-1}\binom{n}{n_3}q_3^{n_3} = \frac{A}{(q_1+q_2)}$.
\end{lemma}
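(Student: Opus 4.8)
The plan is to exploit the single normalization identity $q_1 + q_2 + q_3 = 1$, which was established earlier when the response distribution $(q_1, q_2, q_3)$ was derived from the matrix product $(\pi, 1-\pi)\,Q_{2\times 3}$. From this identity we immediately obtain $q_1 + q_2 = 1 - q_3$, and the entire lemma collapses to a one-line substitution together with factoring out a single power of $(q_1+q_2)$. There is no genuine combinatorial content here; the statement merely rewrites $A$ in terms of $q_1+q_2$ rather than $1-q_3$.

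First I would start from the definition
$$
A = \sum_{0\leq N_3 <n}\frac{1}{n-N_3}\binom{n}{N_3}(1-q_3)^{n-N_3}q_3^{N_3},
$$
and replace each factor $(1-q_3)^{n-N_3}$ by $(q_1+q_2)^{n-N_3}$, which is legitimate since $1-q_3 = q_1+q_2$. This yields
$$
A = \sum_{0\leq N_3 <n}\frac{1}{n-N_3}\binom{n}{N_3}(q_1+q_2)^{n-N_3}q_3^{N_3}.
$$

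Next I would pull a single factor of $(q_1+q_2)$ out of every summand. Since $q_1+q_2 = 1-q_3 > 0$ in the parameter regime considered (so that division is permissible and no term is singular), dividing through gives
$$
\frac{A}{q_1+q_2} = \sum_{0\leq N_3 <n}\frac{1}{n-N_3}\binom{n}{N_3}(q_1+q_2)^{n-N_3-1}q_3^{N_3},
$$
which is exactly the left-hand side after renaming the dummy variable $N_3$ to $n_3$ and noting that the summation range $0 \le N_3 < n$ coincides with $0 \le n_3 \le n-1$.

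The only points worth checking are bookkeeping: that the index ranges match and that $q_1+q_2 \neq 0$ so the division is valid. Both are immediate, so I do not anticipate any substantive obstacle; the lemma is essentially a restatement of $A$ convenient for the subsequent variance computation in Theorem \ref{th:variance}.
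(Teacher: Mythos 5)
Your proposal is correct and matches the paper's (implicit) treatment: the paper states this lemma without any written proof, regarding it as immediate from the definition of $A$ together with the identity $q_1+q_2 = p+q = 1-q_3$, which is exactly the substitution-and-factor argument you give. Your additional remarks on index bookkeeping and the positivity of $q_1+q_2$ are harmless and correct.
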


\begin{lemma} $\sum_{0\leq n_3 \leq n-2} (q_1+q_2)^{n-n_3-2}\binom{n}{n_3}q_3^{n_3} = \frac{1-q_3^n-n(q_1+q_2)q_3^{n-1}}{(q_1+q_2)^2}$.
\end{lemma}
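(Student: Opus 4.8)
The plan is to recognize this as a truncated binomial expansion. The crucial structural fact is that $q_1+q_2+q_3=1$, so that $q_1+q_2=1-q_3$ and the full binomial sum $\sum_{n_3=0}^{n}\binom{n}{n_3}(q_1+q_2)^{n-n_3}q_3^{n_3}=(q_1+q_2+q_3)^n=1$. The stated sum runs only up to $n_3=n-2$, so it is precisely this complete sum minus its two top-degree terms. All the work is in identifying and subtracting those two terms correctly.

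First I would clear the denominator: multiplying both sides by $(q_1+q_2)^2$ reduces the claim to showing
\begin{equation*}
\sum_{0\leq n_3 \leq n-2}\binom{n}{n_3}(q_1+q_2)^{n-n_3}q_3^{n_3}=1-q_3^{n}-n(q_1+q_2)q_3^{n-1}.
\end{equation*}
This is convenient because every exponent of $(q_1+q_2)$ in the rewritten left-hand side is now nonnegative (indeed at least $2$), so there is no hidden singularity and the expression is a genuine polynomial identity in $q_1+q_2$ and $q_3$.

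Next I would invoke the binomial theorem with base $(q_1+q_2)+q_3$. Since $(q_1+q_2)+q_3=1$, the full expansion gives $\sum_{n_3=0}^{n}\binom{n}{n_3}(q_1+q_2)^{n-n_3}q_3^{n_3}=1$. The truncated sum in the rewritten claim is obtained by deleting the $n_3=n-1$ and $n_3=n$ terms. The $n_3=n$ term is $\binom{n}{n}(q_1+q_2)^{0}q_3^{n}=q_3^{n}$, and the $n_3=n-1$ term is $\binom{n}{n-1}(q_1+q_2)^{1}q_3^{n-1}=n(q_1+q_2)q_3^{n-1}$. Subtracting these two quantities from $1$ yields exactly the right-hand side, which completes the proof after dividing back by $(q_1+q_2)^2$.

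There is essentially no conceptual obstacle here; the only care needed is bookkeeping. The main thing to verify is that the summation range $0\leq n_3\leq n-2$ corresponds exactly to omitting the top two indices $n_3=n-1$ and $n_3=n$ from the complete binomial sum, and that the two omitted terms evaluate to $n(q_1+q_2)q_3^{n-1}$ and $q_3^{n}$ respectively. One should also note that the identity holds for $n\geq 2$ so that the range $0\leq n_3\leq n-2$ is nonempty and both omitted indices are distinct and valid; this is the regime in which the lemma is used.
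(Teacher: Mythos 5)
Your proof is correct and is essentially the same as the paper's: the paper also derives the identity by expanding $[(q_1+q_2)+q_3]^n=1$ via the binomial theorem, peeling off the $n_3=n-1$ and $n_3=n$ terms, and dividing by $(q_1+q_2)^2$. Your write-up is just a more careful spelling-out of that one-line argument (and in fact fixes a small exponent typo in the paper's displayed identity).
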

\begin{proof} It follows from the fact that 
\begin{align}
[(q_1+q_2)+q_3]^n  &= [\sum_{0\leq n_3 \leq n-2} (q_1+q_2)^{n-n_3-2}\binom{n}{n_3}q_3^{n_3}] + q_3^n+n(q_1+q_2)q_3^{n-1}.\nonumber 
\end{align}
\end{proof}

\begin{lemma} $\sum_{0\leq n_3 \leq n-2} \frac{1}{n-n_3} (q_1+q_2)^{n-n_3-2}\binom{n}{n_3}q_3^{n_3} = \frac{A-n(q_1+q_2)q_3^{n-1}}{(q_1+q_2)^2}$.
\end{lemma}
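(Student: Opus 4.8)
The plan is to reduce this lemma directly to the preceding one, which already identifies the sum carrying the exponent $n-n_3-1$ with $A/(q_1+q_2)$. Write the target sum as $S := \sum_{0\leq n_3\leq n-2}\frac{1}{n-n_3}(q_1+q_2)^{n-n_3-2}\binom{n}{n_3}q_3^{n_3}$. The only two structural differences between $S$ and the sum treated in the previous lemma are that $S$ carries one fewer factor of $(q_1+q_2)$ and that its summation range stops at $n-2$ rather than $n-1$. Both discrepancies are cheap to repair.

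First I would multiply through by $(q_1+q_2)$, obtaining $(q_1+q_2)S = \sum_{0\leq n_3\leq n-2}\frac{1}{n-n_3}(q_1+q_2)^{n-n_3-1}\binom{n}{n_3}q_3^{n_3}$. The summand on the right is now identical to the one appearing in the previous lemma, the only difference being that the index runs only up to $n-2$.

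Next I would restore the missing top term. Since the previous lemma asserts that the full sum over $0\leq n_3\leq n-1$ equals $A/(q_1+q_2)$, we have $(q_1+q_2)S = \frac{A}{q_1+q_2} - t_{n-1}$, where $t_{n-1}$ is the $n_3=n-1$ term of that full sum. A direct evaluation gives
\[
t_{n-1} = \frac{1}{n-(n-1)}(q_1+q_2)^{0}\binom{n}{n-1}q_3^{n-1} = n\,q_3^{n-1}.
\]
Finally, dividing both sides by $(q_1+q_2)$ and combining over a common denominator yields $S = \frac{A}{(q_1+q_2)^2} - \frac{n\,q_3^{n-1}}{q_1+q_2} = \frac{A - n(q_1+q_2)q_3^{n-1}}{(q_1+q_2)^2}$, which is exactly the asserted identity. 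There is no genuine obstacle in this argument: the entire content is the observation that a single multiplication by $(q_1+q_2)$ aligns $S$ with the previous lemma, and the only points demanding a little care are the bookkeeping of the summation range and the clean evaluation of the boundary term $t_{n-1}$, both of which are elementary.
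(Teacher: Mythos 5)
Your proof is correct and follows essentially the same route as the paper, which simply states that the identity "follows directly from the definition of $A$": multiplying by $(q_1+q_2)$ aligns the sum with $A/(q_1+q_2)$ (equivalently, with the definition of $A$ since $1-q_3=q_1+q_2$), and the discrepancy is exactly the boundary term $n\,q_3^{n-1}$ at $n_3=n-1$. Your version just spells out the bookkeeping the paper leaves implicit.
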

\begin{proof} It follows directly from the definition of $A$.
\end{proof}

 \begin{lemma}
\begin{itemize}
	\item $\mathbb{E}[(\frac{N_1}{N_1+N_2})^2] = \frac{q_1q_2}{(q_1+q_2)^2} A + \frac{q_1^2}{(q_1+q_2)^2}(1-q_3^n)$
	
	\item $\mathbb{E}[(\frac{N_2}{N_1+N_2})^2] = \frac{q_1q_2}{(q_1+q_2)^2} A + \frac{q_2^2}{(q_1+q_2)^2}(1-q_3^n)$.	
\end{itemize}

\end{lemma}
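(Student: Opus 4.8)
The plan is to evaluate $\mathbb{E}[(N_1/(N_1+N_2))^2]$ directly from the multinomial law of $(N_1,N_2,N_3)$ restricted to outcomes with $N_1+N_2\neq 0$, i.e. $n_3\le n-1$, and then to reuse the function $f(q_1,n_3)$ together with the three summation identities established above. The second bullet will then follow by the symmetry $q_1\leftrightarrow q_2$.

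First I would expand
\[
\mathbb{E}\!\left[\Big(\tfrac{N_1}{N_1+N_2}\Big)^2\right]
=\sum_{n_3=0}^{n-1}\binom{n}{n_3}q_3^{n_3}\sum_{n_1=1}^{\,n-n_3}\frac{n_1^2}{(n-n_3)^2}\binom{n-n_3}{n_1}q_1^{n_1}q_2^{\,n-n_3-n_1},
\]
and apply the elementary identity $n_1^2\binom{N}{n_1}=N\,n_1\binom{N-1}{n_1-1}$ with $N=n-n_3$. This collapses the inner sum to $\frac{q_1}{n-n_3}f(q_1,n_3)$, yielding
\[
\mathbb{E}\!\left[\Big(\tfrac{N_1}{N_1+N_2}\Big)^2\right]
= q_1\sum_{n_3=0}^{n-1}\frac{1}{n-n_3}\binom{n}{n_3}q_3^{n_3}\,f(q_1,n_3).
\]
Recognizing $f$ inside the second moment is the one genuinely non-routine step; everything after it is mechanical once the preparatory lemmas are in hand.

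Next I would substitute the closed form of $f$, isolating the boundary case $n_3=n-1$ (where $f=1$ and the term contributes $q_1\,n q_3^{\,n-1}$) from the range $n_3\le n-2$, where $f=(q_1+q_2)^{n-n_3-1}+(n-n_3-1)q_1(q_1+q_2)^{n-n_3-2}$. The $n_3\le n-2$ part then splits into two sums: one over $\frac{1}{n-n_3}(q_1+q_2)^{n-n_3-1}$, which equals $\frac{A}{q_1+q_2}$ after restoring its own $n_3=n-1$ term $nq_3^{\,n-1}$; and one carrying the factor $\frac{n-n_3-1}{n-n_3}=1-\frac{1}{n-n_3}$, which I evaluate by subtracting the identity for $\sum\frac{1}{n-n_3}(q_1+q_2)^{n-n_3-2}(\cdots)$ from the identity for $\sum (q_1+q_2)^{n-n_3-2}(\cdots)$. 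The convenient feature here is that the $n(q_1+q_2)q_3^{\,n-1}$ terms cancel between the two, leaving $q_1^2\,\frac{1-q_3^n-A}{(q_1+q_2)^2}$.

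Finally I would collect the pieces. The two $\pm q_1\,n q_3^{\,n-1}$ contributions cancel, giving
\[
\mathbb{E}\!\left[\Big(\tfrac{N_1}{N_1+N_2}\Big)^2\right]
= \frac{q_1 A}{q_1+q_2}+\frac{q_1^2\,(1-q_3^n-A)}{(q_1+q_2)^2},
\]
and the short simplification $\frac{q_1}{q_1+q_2}-\frac{q_1^2}{(q_1+q_2)^2}=\frac{q_1 q_2}{(q_1+q_2)^2}$ of the coefficient of $A$ yields the claimed formula. The second bullet is identical after exchanging $q_1\leftrightarrow q_2$ (hence $N_1\leftrightarrow N_2$), which leaves the symmetric coefficient $\frac{q_1q_2}{(q_1+q_2)^2}A$ unchanged and replaces $q_1^2$ by $q_2^2$. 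The main obstacle is purely bookkeeping: keeping the summation ranges aligned with the four supporting lemmas and correctly isolating the $n_3=n-1$ boundary term at each stage, so that the cancellations of the $nq_3^{\,n-1}$ and $n(q_1+q_2)q_3^{\,n-1}$ terms go through as intended.
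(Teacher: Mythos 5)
Your proposal is correct and takes essentially the same route as the paper's proof: the same reduction of the inner sum to $\frac{q_1}{n-n_3}f(q_1,n_3)$ via the absorption identity, the same substitution of the closed form of $f$, and the same three summation lemmas, with the same cancellations of the $nq_3^{n-1}$ and $n(q_1+q_2)q_3^{n-1}$ terms. The only cosmetic difference is that you isolate the $n_3=n-1$ boundary term explicitly, whereas the paper absorbs it into the first sum.
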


\begin{proof} We only show the first part and the proof of the second is similar. 
	\begin{align}
	\mathbb{E}[(\frac{N_1}{N_1+N_2})^2]  = & \sum_{n_1+n_2+n_3=n, n_1\neq 0} (\frac{n_1}{n_1+n_2})^2 \binom{n_1+n_2}{n_1}\binom{n}{n_3} q_1^{n_1}q_2^{n_2}q_3^{n_3} \nonumber\\
	=  & q_1 \sum_{n_1+n_2+n_3=n, n_1\neq 0}\frac{1}{n-n_3}n_1\binom{n_1+n_2-1}{n_1-1}\binom{n}{n_3}q_1^{n_1-1}q_2^{n_2}q_3^{n_3}\nonumber\\
	 = & q_1 \sum_{0\leq n_3 \leq n-1}\frac{1}{n-n_3}[\sum_{n_1+n_2=n-n_3,n_1\neq 0}n_1\binom{n_1+n_2-1}{n_1-1}q_1^{n_1-1}q_2^{n_2}]\binom{n}{n_3} q_3^{n_3}\nonumber\\
	= & q_1 [\sum_{0\leq n_3 \leq n-1} \frac{1}{n-n_3}(q_1+q_2)^{n-n_3-1}\binom{n}{n_3}q_3^{n_3} + q_1\sum_{0\leq n_3 \leq n-2} (q_1+q_2)^{n-n_3-2}\binom{n}{n_3}q_3^{n_3}-\nonumber \\
	&q_1\sum_{0\leq n_3 \leq n-2} \frac{1}{n-n_3} (q_1+q_2)^{n-n_3-2}\binom{n}{n_3}q_3^{n_3} ] \text{ (Lemma 0.2) }\nonumber\\
	= & q_1[\frac{A}{(q_1+q_2)} + q_1\frac{1-q_3^n-n(q_1+q_2)q_3^{n-1}}{(q_1+q_2)^2} -  q_1\frac{A-n(q_1+q_2)q_3^{n-1}}{(q_1+q_2)^2} ] \text{ (Lemmas 0.3, 0.4,0.5) }\nonumber \\
	=& \frac{q_1q_2}{(q_1+q_2)^2} A + \frac{q_1^2}{(q_1+q_2)^2}(1-q_3^n) \nonumber 
	\end{align}
\end{proof}

\textbf{Proof of Theorem 3}:
    \begin{align*}
        Var[\hat{\pi} | N_1+ N_2 \neq 0] &= Var[ \frac{N_2 q- N_1p}{(N_1+N_2)(q-p)}| N_1+ N_2 \neq 0]\\
        &=Var[ \frac{N_2 q+ N_2p-(N_1+N_2)p}{(N_1+N_2)(q-p)}|N_1+ N_2 \neq 0]\\
        &=Var[\frac{N_2}{N_1+N_2}\frac{p+q}{q-p}-\frac{p}{q-p} | N_1+ N_2 \neq 0]\\
        &=\frac{(p+q)^2}{(q-p)^2}Var[\frac{N_2}{N_1+N_2} | N_1+ N_2 \neq 0]\\
        &=\frac{(p+q)^2}{(q-p)^2}[E[(\frac{N_2}{N_1+N_2})^2| N_1+ N_2 \neq 0]-E[\frac{N_2}{N_1+N_2}|N_1+ N_2 \neq 0]^2]\\
        &= \frac{(p+q)^2}{(q-p)^2}[\frac{q_1q_2}{(q_1+q_2)^2} A + \frac{q_2^2}{(q_1+q_2)^2}-\frac{q_2^2}{(q_1+q_2)^2}] \text{ (Lemma 0.6) }\\
        &=\frac{1}{(q-p)^2}q_1q_2A\\
        &=\frac{1}{(q-p)^2}[\pi p+(1-\pi)q][\pi q+(1-\pi)p]A
    \end{align*}

\textbf{Proof of Corollary 3.11}: $f'(q) = \frac{(n+1)(q+3p)+(\pi-1/2)(p+q)}{(p-q)^3(n+1)} >0$. 

\textbf{Proof of Corollary 3.12}: When $p+q$ is a constant, the optimaility of $Q_{GWRR}$ follows from the same arguments for $G_{WRR}$ (Example 1) in \cite{Warner65} and \cite{HolohanLM17}.

\textbf{Proof of Lemma 3.14}:  $\mathcal{P}_{bel_x} = \{pr: pr\text{ is a probability distribution and }pr(A)\geq bel_x(A)\text{ for all } A\}$ and $\mathcal{P}_{bel_{x'}} = \{pr: pr\text{ is a probability distribution and }pr(A)\geq bel_{x'}(A)\text{ for all } A\}$. For a given $E\subseteq Y$, by using specialization matrices in \cite{KlawonnS92}, we can always find a $pr_x\in \mathcal{P}_{bel_x}$ such that $pr_x(E) = pl_x(E)$ and a $pr_{x'}$ such that $pr_{x'} (E) = bel_{x'}(E)$. And Lemma 5 follows directly. 

\textbf{Proof of Lemma 3.15}: The proof is similar to that for Lemma 2.

\textbf{Proof of Lemma 3.16}: According to the well-known weak von-Neumann-Birkhoff Lemma, we only need to prove the proposition only for a deterministic function $f: Y\rightarrow Z$. The proof for $\epsilon$-SLDP algorithm is similar to that of the post-processing property for probabilistic privacy mechanisms. For $\epsilon$-WLDP, consider the two probability functions $p_x$ and $p_{x'}$ such that 
\begin{align}
    r^Q_W = \frac{p_x(E)}{p_{x'}(E)}
\end{align}
for some $E$.  Then 
\begin{align*}
    r^{f\circ Q}_W &= max_{x,x'\in X, E\subseteq Z} \frac{p_xf^{-1}(E)}{p_{x'}f^{-1}(E)}\\
    & \leq max_{x,x'\in X, E\subseteq Z} \frac{p_x(E)}{p_{x'}(E)}\\
    & \leq r^Q_W.
\end{align*}



\textbf{Proof of Theorem 3.18}  Let $R$ be a rejection region. Since $Q$ is $\epsilon$-WLDP, the following inequalities hold:
\begin{align}
	e^{-\epsilon} & \leq \frac{pl_{x'}^Q(R)}{bel_x^Q(R)}, \frac{pl_{x}^Q(R)}{bel_{x'}^Q(R)} \leq  e^{\epsilon}\\
	e^{-\epsilon} & \leq \frac{pl_{x'}^Q(R^c)}{bel_x^Q(R^c)}, \frac{pl_{x}^Q(R^c)}{bel_{x'}^Q(R^c)}  \leq  e^{\epsilon}
\end{align}
\begin{enumerate}
	\item Assume that $pl_x^Q(R)\leq \alpha$.  It follows that $e^{-\epsilon}bel_{x'}^Q(R)\leq pl_x^Q(R)\leq \alpha$.  This implies that $bel_{x'}^Q(R) \leq e^{\epsilon}\alpha$. So $pl_{x'}^Q(R^c)\geq 1-e^{\epsilon}\alpha$. Moreover, Since $bel_x^Q(R^c)\leq e^{\epsilon}pl_{x'}^Q(R^c)$, $1-e^{\epsilon}pl_{x'}^Q(R^c)\leq 1-bel_x^Q(R^c) = pl_x^Q(R)\leq \alpha$. So we have that $pl_{x'}(R^c)\geq e^{-\epsilon}(1-x)$. Putting all these togther, we obtain that $T^{pe}(Q(x), Q(x')) (\alpha)\geq max\{1-\alpha e^{\epsilon}, 0, e^{-\epsilon}(1-\alpha)\}$. 

	\item Assume that $pl_x^Q(R^c)\geq \alpha$. Since $pl_x^Q(R^c)\leq e^{\epsilon}bel_{x'}^Q(R^c)= e^{\epsilon}(1-pl_{x'}^Q(R))$, $pl_{x'}^Q(1-\alpha)$.  Moreover, $\alpha \leq pl_{x'}^Q(R^c)\leq e^{\epsilon} bel_x^Q(R^c) = e^{\epsilon}(1-pl_x^Q(R))$.  It follows that $pl_{x'}^Q(R)\leq 1-e^{-\epsilon}\alpha$. So $T^{op}(Q(x), Q(x'))(\alpha)\leq  min\{1-\alpha e^{-\epsilon}, e^{\epsilon}(1-\alpha)\}$.

\end{enumerate}
The other direction of the equivalence can shown in a similar way. We have finished the proof of Theorem 4.

\textbf{Proof of Corollary 3.19} The proof is similar to that for Corollary 1.  The only things that we need to pay much attention to is the direction of the inequalities.


\textbf{Proof of Corollary 3.20}: From the above analysis, we know that $\epsilon^W(p,q)$ is defined as the privacy loss of the following matrix $Q_1$ in Warner's model:

\begin{equation*}
Q_{1}= \left(
\begin{array}{ccc}
1-q & q  \\
q  & 1-q 
\end{array} \right)
\end{equation*}
So $\epsilon^W(p,q)= ln(\frac{1-q}{q})$ and it is decreasing with respct to $q$.  And $\nu^W(p,q)$ is defined  as the accuracy of the following matrix $Q_0$ in Warner's model:
\begin{equation*}
Q_{0}= \left(
\begin{array}{ccc}
p & 1-p  \\
1-p  & p 
\end{array} \right)
\end{equation*}
So $\nu^W(p,q)= \frac{-(\pi-1/2)^2 +\frac{1}{4}(\frac{1}{2p-1})^2}{n}$ and it is decreasing with respect to $p$.

    \end{document}